\definecolor{Navy}{rgb}{0.0,0.20,0.60}
\DeclareMathAlphabet{\mathcal}{OMS}{cmsy}{m}{n}
\newcommand{\eop}{\hspace*{\fill}$\Box$}
\def\qed{\eop}
\newtheorem{definition}{Definition}[section]
\newtheorem{lemma}[definition]{Lemma}
\newtheorem{theorem}[definition]{Theorem}
{\theorembodyfont{\rmfamily}
  \newtheorem{example}[definition]{\it Example}
  \newtheorem{proof}{\it Proof.}
}
\def\ttunderscore{\char95}
\def\cE{\mathcal{E}}
\def\cH{\mathcal{H}}
\def\cR{\mathcal{R}}
\def\cS{\mathcal{S}}
\def\cV{\mathcal{V}}
\newcommand{\funtype}{\Rightarrow}
\def\cI{\mathcal{I}}
\newcommand{\cJ}{\mathcal{J}}
\newcommand{\Interpret}[1]{\llbracket #1 \rrbracket}
\newcommand{\Bool}{\mathbb{B}}
\def\cC{\mathcal{C}}
\def\cD{\mathcal{D}}
\def\Constrained#1{[\,#1\,]}
\newcommand{\FVar}{{\mathcal{V}\mathit{ar}}}
\newcommand{\Dom}{{\mathcal{D}\mathit{om}}}
\newcommand{\cRcalc}{\cR_{\mathtt{calc}}}
\newcommand{\Sigmaterms}{\Sigma_{\mathit{terms}}}
\newcommand{\Sigmalogic}{\Sigma_{\mathit{theory}}}
\newcommand{\Sigmatheory}{\Sigma_{\mathit{theory}}}
\newcommand{\Sigmacore}{\Sigma_{\mathit{theory}}^{\mathit{core}}}
\newcommand{\Sigmaint}{\Sigma_{\mathit{theory}}^{\INT}}
\def\Val{{\mathcal{V}\mathit{al}}}
\newcommand{\LVar}{{\mathcal{LV}\mathit{ar}}}
\newcommand{\CTerm}[2]{#1\,\Constrained{#2}}
\def\sort#1{\mathit{#1}}
\newcommand{\symb}[1]{\mathsf{#1}}
\newcommand{\Int}{\mathbb{Z}}
\newcommand{\INT}{\mathit{int}}
\newcommand{\BOOL}{\mathit{bool}}
\def\Rule#1#2{#1 \to #2}
\def\CRule#1#2#3{#1 \to #2\ \Constrained{#3}}
\newcommand{\CEqn}[4][\approx]{#2 \mathrel{#1} #3\ \Constrained{#4}}
\newcommand{\Eqn}[3][\approx]{#2 \mathrel{#1} #3}
\def\Expd{\mathit{Expd}}
\def\RIstep{\vdash_{\mathit{RI}}}
\def\Expansion{\textsc{Expansion}}
\def\Simplification{\textsc{Simplification}}
\def\CaseSplitting{\textsc{CaseSplitting}}
\def\Deletion{\textsc{Deletion}}
\def\Generalization{\textsc{Generalization}}
\newcommand{\Emph}[2][blue]{\textcolor{#1}{#2}}
\def\cRsum{\cR_{\mathit{sum}}}
\def\cRsumcheck{\cR'_{\mathit{check}}}
\def\cRsumverif{\cR_1}
\def\cRcheck{\cR_{\mathit{check}}}
\newcommand{\lrang}[2][P]{\mathit{#1}} 
\newcommand{\true}{\symb{true}} 
\newcommand{\false}{\symb{false}} 
\newcommand{\Skip}{{\bf skip}}
\newcommand{\ifff}{{\bf if}} 
\newcommand{\elseee}{{\bf else}} 
\newcommand{\while}{{\bf while}} 
\newcommand{\triple}[3]{\left\lbrace #1\right\rbrace\> #2 \> \left\lbrace #3\right\rbrace}
\newcommand{\ttriple}[3]{[ #1 ] \> #2 \> [ #3 ]} 
\def\lnum#1{{\footnotesize #1}}
\def\anum#1{{\footnotesize \Emph[Navy]{#1}}}
\def\Lnum#1{\Emph[Navy]{#1}}
\def\Psum{P_{\mathit{sum}}}
\def\Psumtab{T_{\mathit{sum}}}
\def\Start{\mathit{start}}
\def\Check{\symb{chk}}
\def\State{\symb{state}}
\newcommand{\Eval}[1]{\Rightarrow_{#1}}
\def\TransOneStep{\mathit{Trans}_1}
\def\Trans{\mathit{Trans}}
\def\EP{e_P}
\title{Transforming Proof Tableaux of Hoare Logic into Inference Sequences of Rewriting Induction}
\author{Shinnosuke Mizutani
\institute{%
Graduate School of Information Science\\
Nagoya University\\
Nagoya, Japan
}
\email{mizutani{\ttunderscore}s@trs.cm.is.nagoya-u.ac.jp}
\and
Naoki Nishida
\institute{%
Graduate School of Informatics\\
Nagoya University\\
Nagoya, Japan
}
\email{nishida@i.nagoya-u.ac.jp}
}
\begin{document}
\maketitle

\begin{abstract}
A proof tableau of Hoare logic is an annotated program with pre- and post-conditions, which corresponds to an inference tree of Hoare logic. 
In this paper, we show that a proof tableau for partial correctness can be transformed into an inference sequence of rewriting induction for constrained rewriting.
We also show that the resulting sequence is a valid proof for an inductive theorem corresponding to the Hoare triple if the constrained rewriting system obtained from the program is terminating.
Such a valid proof with termination of the constrained rewriting system implies total correctness of the program w.r.t.\ the Hoare triple.
The transformation enables us to apply techniques for proving termination of constrained rewriting to proving total correctness of programs together with proof tableaux for partial correctness. 
\end{abstract}




\section{Introduction}
\label{sec:intro}

In the field of term rewriting, automated reasoning about \emph{inductive theorems} has been well investigated.
Here, an inductive theorem of a term rewriting system (TRS) is an equation that is inductively valid, i.e., all of its ground instances are theorems of the TRS.
As principles for proving inductive theorems, we cite \emph{inductionless induction}~\cite{Mus80,Hue82} and \emph{rewriting induction} (RI)~\cite{Red90}, both of which are called \emph{implicit induction principles}.
Frameworks based on the RI principle (RI frameworks, for short) consist of inference rules to prove that given equations are inductive theorems.
On the other hand, RI-based methods are procedures within RI frameworks to apply inference rules under specified strategies.
In recent years, various RI-based methods for \emph{constrained rewriting} (see, e.g., constrained TRSs~\cite{FNSKS08b,SNS11}, conditional and constrained TRSs~\cite{BJ08}, $\mathbb{Z}$-TRSs~\cite{FK12}, and logically constrained TRSs~\cite{KN13frocos}) have been developed~\cite{BJ08,SNSSK09,FK12,KN14aplas,FKN17tocl}.
Constrained systems
have built-in semantics for some function and predicate symbols
and have been used as a computation model of not only functional but also imperative programs~\cite{FK08,FGPSF09,FNSKS08b,FK09,Vid12,KN14aplas,FKN17tocl}.

For program verification, several techniques have been investigated in the literature, e.g., \emph{model checking}, \emph{Hoare logic}, etc.
On the other hand, constrained rewriting can be used as a computation model of some imperative programs (cf.~\cite{FKN17tocl}), and RI frameworks for constrained rewriting are tuned to verification of imperative programs, e.g. equivalence of two functions under the same specification.
Some RI frameworks succeed in proving equivalence of an imperative program and its functional specification such that a proof based on Hoare logic needs a loop invariant (cf.~\cite{FKN17tocl}).
From such experiences, we are interested in differences between RI frameworks and other verification methods. 

In this paper, we show that a \emph{proof tableau} of Hoare logic can be transformed into an inference sequence of rewriting induction for \emph{logically constrained TRSs} (LCTRSs).
Here, a proof tableau is an annotated \emph{while} program with pre- and post-conditions, which corresponds to an inference tree of Hoare logic.
We also show that the resulting inference sequence is a valid proof for an inductive theorem corresponding to the Hoare triple for the proof tableau if the LCTRS obtained from the program is terminating.

Given a \emph{while} program $P$ and a proof tableau $T_P$ of a Hoare triple $\triple{\varphi_P}{P}{\psi_P}$ for \emph{partial correctness}, we proceed as follows:
\begin{enumerate}
	\item 
We transform 
$P$ 
into an equivalent LCTRS $\cR_P$, and we prove termination of the LCTRS $\cR_P$.
	\item 
We prepare rewrite rules $\cRcheck$ to verify the post-condition $\psi_P$ in the proof tableau.
	\item 
We prepare a constrained equation $e_P$ corresponding to the Hoare triple $\triple{\varphi_P}{P}{\psi_P}$.
	\item 
Starting with the equation $e_P$, we transform the proof tableau into an inference sequence $(\{e_P\},\emptyset)$ $\mathrel{\RIstep} \cdots \mathrel{\RIstep} (\emptyset,\cH)$ of RI in a top-down fashion, where we do not prove termination in constructing the inference sequence of RI.
\end{enumerate}
In addition to the above transformation, we show that termination of the LCTRS $\cR_P$ implies termination of the LCTRS $\cR_P \cup \cRcheck \cup \cH$.
Termination of the LCTRS $\cR_P\cup\cRcheck\cup\cH$ ensures that the resulting inference sequence $(\{e_P\},\emptyset) \mathrel{\RIstep} \cdots \mathrel{\RIstep} (\emptyset,\cH)$ is a valid proof of RI---the equation $e_P$ is an inductive theorem of the LCTRS $\cR_P$---and thus, the \emph{while} program $P$ is \emph{totally correct} w.r.t.\ $\varphi_P$ and $\psi_P$. 

The contribution of this paper is 
a top-down transformation of proof tableaux for partial correctness to inference sequences of RI, which enables us to apply techniques for proving termination of constrained rewriting to proving total correctness together with proof tableaux for partial correctness. 

This paper is organized as the follows.
In Section~\ref{sec:preliminaries}, we briefly recall LCTRSs, \emph{while} programs, and a conversion of \emph{while} programs to LCTRSs.
In Section~\ref{sec: hoare}, we recall proof tableaux of Hoare logic, and in Section~\ref{sec: ri}, we recall the framework of rewriting induction for LCTRSs.
In Section~\ref{sec: transform}, we show that a proof tableau can be transformed into an inference sequence of RI, and the resulting inference sequence is a valid proof for total correctness if the LCTRS obtained from the proof tableau is terminating.
In Section~\ref{sec:conclusion}, we conclude this paper and describe future direction of this research. 

\section{Preliminaries}
\label{sec:preliminaries}

In this section, we recall 
LCTRSs, following the definitions in~\cite{KN13frocos,FKN17tocl}.
We also recall \emph{while programs}, and then introduce a conversion of \emph{while} programs to LCTRSs. 
Familiarity with basic notions on term rewriting~\cite{BN98,Ohl02} is assumed.

\subsection{Logically Constrained Term Rewriting Systems}


Let $\cS$ be a set of \emph{sorts} and $\cV$ a countably infinite set of \emph{variables}, each of which is equipped with a sort.
A \emph{signature} $\Sigma$ is a set, disjoint from $\cV$, of \emph{function symbols} $f$, each of which is equipped with a \emph{sort declaration} $\iota_1 \times \cdots \times \iota_n \funtype \iota$ where $\iota_1,\ldots,\iota_n,\iota \in \cS$.
For readability, we often write $\iota$ instead of
$\iota_1 \times \cdots \times \iota_n  \funtype \iota$ if $n=0$.
We denote the set of well-sorted \emph{terms}  
over $\Sigma$ and $\cV$
by $T(\Sigma,\cV)$.
In the rest of this section, we fix $\cS$, $\Sigma$, and $\cV$.
The set of variables occurring in $s$ is denoted by $\FVar(s)$.
Given a term $s$ and a \emph{position} $p$ (a sequence of positive integers) of $s$, $s|_p$ denotes the subterm of $s$ at position $p$,
and
$s[t]_p$ denotes $s$ with the subterm at position $p$ replaced by $t$.

A \emph{substitution} $\gamma$ is a sort-preserving total mapping from $\cV$ to $T(\Sigma,\cV)$, and naturally extended for a mapping from $T(\Sigma,\cV)$ to $T(\Sigma,\cV)$:
the result $s\gamma$ of applying a substitution $\gamma$ to a term $s$ is $s$ with all occurrences of a variable $x$ replaced by $\gamma(x)$.
The \emph{domain} $\Dom(\gamma)$ of $\gamma$ is the set of variables $x$ with $\gamma(x) \neq x$.
The notation $\{ x_1 \mapsto s_1, \ldots, x_k \mapsto s_k \}$ denotes a substitution $\gamma$ with $\gamma(x_i) = s_i$ for $1 \leq i \leq n$, and $\gamma(y) = y$ for $y \notin \{x_1, \ldots,x_n\}$.
%


To define LCTRSs, we consider different kinds of symbols and terms:
(1) two signatures $\Sigmaterms$ and $\Sigmalogic$ such that $\Sigma = \Sigmaterms \cup \Sigmalogic$,
(2) a mapping $\cI$ which assigns to each sort $\iota$ occurring in $\Sigmatheory$ a set $\cI_\iota$,
(3) a mapping $\cJ$ which assigns to each $f : \iota_1 \times \cdots \times \iota_n \funtype \iota \in \Sigmalogic$ a function in $\cI_{\iota_1} \times \cdots \times \cI_{\iota_n} \funtype \cI_\iota$,
and
(4) a set $\Val_\iota \subseteq \Sigmalogic$ of \emph{values}---function symbols $a : \iota$ such that $\cJ$ gives a bijective mapping from $\Val_\iota$ to $\cI_\iota$---for each sort $\iota$ occurring in $\Sigmatheory$.
We require that $\Sigmaterms \cap \Sigmalogic \subseteq \Val = \bigcup_{\iota \in \cS} \Val_\iota$.
The sorts occurring in $\Sigmalogic$ are called \emph{theory sorts}, and the symbols \emph{theory symbols}.
Symbols in ${\Sigmalogic} \setminus {\Val}$ are \emph{calculation symbols}.
A term in $T(\Sigmalogic,\cV)$ is called a \emph{logical term}.
For ground logical terms, we define the interpretation as $\Interpret{f(s_1,\ldots,s_n)} = \cJ(f)(\Interpret{s_1},\ldots,\Interpret{s_n})$.
For every ground logical term $s$, there is a unique value $c$ such that $\Interpret{s} = \Interpret{c}$. 
We use infix notation for theory and calculation symbols.

A \emph{constraint} is a logical term $\varphi$ of some sort $\BOOL$ with $\cI_\BOOL = \Bool = \{ \top,\bot \}$, the set of \emph{booleans}.
A constraint $\varphi$ is \emph{valid} if $\Interpret{\varphi\gamma} = \top$ for all substitutions $\gamma$ which map $\FVar(\varphi)$ to values, and \emph{satisfiable} if $\Interpret{\varphi\gamma} = \top$ for some such substitution.
A substitution $\gamma$ \emph{respects} $\varphi$ if $\gamma(x)$ is a value for all $x \in \FVar(\varphi)$ and $\Interpret{\varphi\gamma} = \top$.
We typically choose a theory signature with $\Sigmalogic \supseteq \Sigmacore$, where $\Sigmacore$ contains
$\symb{true},\symb{false} : \BOOL$, $\wedge, \vee, \implies : \BOOL \times \BOOL \funtype \BOOL$, $\neg: \BOOL \funtype \BOOL$, and, for all theory sorts $\iota$, symbols $=_\iota, \neq_\iota : \iota \times \iota \funtype \BOOL$, and an evaluation function $\cJ$ that interprets these symbols as expected. 
We omit the sort subscripts from $=$ and $\neq$ when clear from context.

The standard integer signature $\Sigmaint$ is $\Sigmacore \cup \{ +, -,*,\symb{exp},\symb{div}, \symb{mod} : \INT \times \INT \funtype \INT \}\cup \{ {\geq}, {>} : \INT \times \INT \funtype \BOOL \} \cup \{ \symb{n} : \INT \mid n \in \Int \}$ with values $\symb{true}$, $\symb{false}$, and $\symb{n}$ for all integers $n \in \Int$.
\nopagebreak
Thus, we use $\symb{n}$ (in \textsf{sans-serif} font) as the function symbol for $n \in \Int$ (in $\mathit{math}$ font).
We define $\cJ$ in the natural way, except: since all $\cJ(f)$ must be total functions, we set $\cJ(\symb{div})(n,0) = \cJ(\symb{mod})(n,0) = \cJ(\symb{exp})(n,k) = 0$ for all $n$ and all $k < 0$. 
When constructing LCTRSs from, e.g., \emph{while} programs, we can add explicit error checks for, e.g., ``division by zero'', to constraints (cf.~\cite{FKN17tocl}).



A \emph{constrained rewrite rule} is a triple $\CRule{\ell}{r}{\varphi}$ such that $\ell$ and $r$ are terms of the same sort, $\varphi$ is a constraint, and $\ell$ has the form $f(\ell_1,\dots,\ell_n)$ and contains at least one symbol in $\Sigmaterms \setminus \Sigmatheory$ (i.e., $\ell$ is not a logical term).
If $\varphi = \symb{true}$ with $\cJ(\symb{true}) = \top$, we may write $\Rule{\ell}{r}$.
We define $\LVar(\CRule{\ell}{r}{\varphi})$ as $\FVar(\varphi) \cup (\FVar(r) \setminus \FVar(\ell))$.
We say that a substitution $\gamma$ \emph{respects} $\CRule{\ell}{r}{\varphi}$ if $\gamma(x) \in \Val$ for all $x \in \LVar(\CRule{\ell}{r}{\varphi})$, and $\Interpret{\varphi\gamma} = \top$.
Note that it is allowed to have $\FVar(r) \not\subseteq \FVar(\ell)$, but fresh variables in the right-hand side may only be instantiated with \emph{values}.
%
Given a set $\cR$ of constrained rewrite rules, we let $\cRcalc$ be the set $\{ \CRule{f(x_1,\ldots,x_n)}{y}{y = f(x_1,\ldots,x_n)} \mid f : \iota_1 \times \cdots \times \iota_n \funtype \iota \in {\Sigmalogic} \setminus {\Val} \}$.
We usually call the elements of $\cRcalc$ constrained rewrite rules (or \emph{calculation rules}) even though their left-hand side is a logical term.
The \emph{rewrite relation} $\to_{\cR}$ is a binary relation on terms, defined by:
$s[\ell\gamma]_p \mathrel{\to_\cR} s[r\gamma]_p$ if 
$\CRule{\ell}{r}{\varphi} \in \cR \cup \cRcalc$ and $\gamma$ respects $\CRule{\ell}{r}{\varphi}$.
A reduction step with $\cRcalc$ is called a \emph{calculation}.

Now we define a \emph{logically constrained term rewriting system} (LCTRS) as the abstract rewriting system $(T(\Sigma,\cV),\to_\cR)$.
An LCTRS is usually given by supplying $\Sigma$, $\cR$, and an informal description of $\cI$ and $\cJ$ if these are not clear from context.
An LCTRS $\cR$ is said to be \emph{left-linear} if for every rule in $\cR$, the left-hand side is linear.
$\cR$ is said to be \emph{non-overlapping} if
  for every term $s$ and rule $\CRule{\ell}{r}{\varphi}$ such that $s$ reduces with $\CRule{\ell}{r}{\varphi}$ at the root
  position: (a) there are no other rules $\CRule{\ell'}{r'}{\varphi'}$ such that $s$ reduces
  with $\CRule{\ell'}{r'}{\varphi'}$ at the root position, and (b) if $s$ reduces with any
  rule at a non-root position $q$, then $q$ is not a position of
  $\ell$.
$\cR$ is said to be \emph{orthogonal} if $\cR$ is left-linear and non-overlapping.
For $\CRule{f(\ell_1,\ldots,\ell_n)}{r}{\varphi} \in \cR$, we call $f$ a \emph{defined symbol} of $\cR$, 
and non-defined elements of $\Sigmaterms$ and all values are called \emph{constructors} of $\cR$.
Let $\cD_\cR$ be the set of all defined symbols and $\cC_\cR$ the set of constructors.
A term in $T(\cC_\cR,\cV)$ is a \emph{constructor term} of $\cR$.

\begin{example}[\cite{FKN17tocl}]
Let $\cS = \{ \INT,\BOOL \}$, and
$\Sigma = \Sigmaterms \cup \Sigmaint$, where
$
\Sigmaterms = \{~ \symb{fact} : \INT \funtype \INT ~\} \cup \{~ \symb{n} : \INT \mid n \in \Int ~\}
$.
Then both $\INT$ and $\BOOL$ are theory sorts.
We also define set and function interpretations, i.e., $\cI_\INT = \Int$, $\cI_\BOOL = \Bool$, and $\cJ$ is defined as above. 
Examples of logical terms are $\symb{0} = \symb{0}+\symb{-1}$ and
$x+\symb{3} \geq y + -\symb{42}$ that are constraints.
$\symb{5}+\symb{9}$ is also a (ground) logical term, but not a constraint.
Expected starting terms are, e.g.,
$\symb{fact}(\symb{42})$ or $\symb{fact}(\symb{fact}(\symb{-4}))$.
\label{exa:factlctrs}
To implement an LCTRS calculating the \emph{factorial} function, we use the signature $\Sigma$ above 
 and the following rules:
$
\cR_{\symb{fact}}
 = \{\ 
\CRule{\symb{fact}(x)}{\symb{1}}{x \leq \symb{0}},
~~
\CRule{\symb{fact}(x)}{x \times \symb{fact}(x-\symb{1})}{\neg (x \leq \symb{0})}
\ \}
$. 
Using calculation steps, a term $\symb{3}-\symb{1}$ reduces to $\symb{2}$ in one step with the calculation rule $\CRule{x-y}{z}{z = x-y}$, and $\symb{3} \times (\symb{2} \times (\symb{1} \times \symb{1}))$ reduces to $\symb{6}$ in three steps. 
Using the constrained rewrite rules in $\cR_{\symb{fact}}$, $\symb{fact}(\symb{3})$ reduces in ten steps to $\symb{6}$.
\end{example}

A \emph{constrained term} is a pair $\CTerm{s}{\varphi}$ of a term $s$ and a constraint $\varphi$.
We say that $\CTerm{s}{\varphi}$ and $\CTerm{t}{\psi}$ are \emph{equivalent}, written by $\CTerm{s}{\varphi} \sim \CTerm{t}{\psi}$, if for all substitutions $\gamma$ which respect $\varphi$, there is a substitution $\delta$ which respects $\psi$ such that $s\gamma = t\delta$, and vice versa.
Intuitively, a constrained term $\CTerm{s}{\varphi}$ represents all terms $s\gamma$ where $\gamma$ respects $\varphi$, and can be used to reason about such terms.
For this reason, equivalent constrained terms represent the same set of terms.
%
For a rule $\rho := \CRule{\ell}{r}{\psi}\in \cR \cup \cRcalc$ and position $q$, we let $\CTerm{s}{\varphi} \mathrel{\to_{\rho,q}} \CTerm{t}{\varphi}$ if there exists a substitution $\gamma$ such that $s|_q = \ell\gamma$, $t = s[r\gamma]_q$, $\gamma(x)$ is either a value or a variable in $\FVar(\varphi)$ for all $x \in \LVar(\CRule{\ell}{r}{\psi})$, and $\varphi \implies (\psi\gamma)$ is valid.
We write $\CTerm{s}{\varphi} \mathrel{\to_{\mathtt{base}}} \CTerm{t}{\varphi}$ for $\CTerm{s}{\varphi} \mathrel{\to_{\rho,q}} \CTerm{t}{\varphi}$ with some $\rho,q$.
The relation $\to_\cR$ on constrained terms is defined as ${\sim} \cdot {\to_{\mathtt{base}}} \cdot {\sim}$. 

\subsection{While Programs}

In this section, we recall the syntax of \emph{while programs} (see e.g.,~\cite{Rey98}).

We deal with a simple class of \emph{while} programs over the integers, which consist of assignments, skip, sequences, ``if'' statements, and ``while'' statements with loop invariants:
a ``while'' statement is of the form $\while\, @\,\zeta\,(\psi) \{ c \}$ with $\zeta$ a loop invariant.
To deal with proof tableaux, we allow to write assertions of the form $@ \varphi$ as annotations.
An \emph{annotated} \emph{while program} is defined by the following BNF:
\begin{eqnarray*}
\lrang{comm} &::=& v := \lrang[E]{intexp} \mid \Skip \mid  \lrang{comm}\,; \lrang{comm} \mid @\,\lrang[B]{bexp} 
                      \mid \ifff ( \lrang[B]{bexp} ) \{ \lrang{comm} \} \elseee \{ \lrang{comm} \}
                          \mid \while\, @\,\lrang[B]{bexp}\,(\lrang[B]{bexp}) \{ \lrang{comm} \} \\
\lrang[E]{intexp} &::=& n 
                      \mid v 
						\mid (\lrang[E]{intexp} \mathrel{+} \lrang[E]{intexp}) 
						\mid (\lrang[E]{intexp} \mathrel{-} \lrang[E]{intexp}) 
						\mid (\lrang[E]{intexp} \mathrel{*} \lrang[E]{intexp}) 
						\mid (\lrang[E]{intexp} \mathrel{/} \lrang[E]{intexp}) 
						\\
\lrang[B]{bexp} &::=& \true \mid \false
                         \mid \lrang[E]{intexp} = \lrang[E]{intexp} 
                         \mid \lrang[E]{intexp} < \lrang[E]{intexp} 
                         \mid (\lnot \lrang[B]{bexp})
                         \mid (\lrang[B]{bexp} \lor \lrang[B]{bexp}) 
\end{eqnarray*}
where $n \in \Int$, $v \in \cV$, and we may omit brackets in the usual way.
We use $\ne$, $\leq$, $>$, $\geq$, $\land$, $\implies$, etc, as syntactic sugars.
We abbreviate $\while\, @\,\true\,(\psi) \{ c \}$ to $\while(\psi)\{ c \}$.
For page limitation, we do not introduce the semantics of \emph{while} programs, and they are evaluated in the usual way:
in evaluating \emph{while} programs, we ignore loop invariants and assertions, while they are taken into account in considering proof tableaux. 
For a \emph{while} program $P$, we denote the set of variables appearing in $P$ by $\FVar(P)$.
Given an assignment $\theta$ for $\FVar(P)$, we write $\theta \mathrel{\Eval{P}} \theta'$ if the execution of $P$ starts with $\theta$ and halts with an assignment $\theta'$.
We abuse assignments for variables as substitutions for terms.

\begin{example}
\label{ex:sum}
The following, denoted by $\Psum$, is a \emph{while} program with 
$\FVar(\Psum)=\{x,i,z\}$,
which computes the summation from $0$ to $x$ if $x \geq 0$.
\begin{tabbing}
00000000 \=00 \= 00 \= 00 \= 00 \= 00 \kill
\> \lnum{1}\' \> $i := 0;$\\
\> \lnum{2}\' \> $z := 0;$\\
\> \lnum{3}\' \> $\while (x > i)\{$\\
\> \lnum{4}\' \> \> $z := z + i + 1;$\\
\> \lnum{5}\' \> \> $i := i + 1;$\\
\> \lnum{6}\' \> $\}$\\
\> \lnum{7}\' 
\end{tabbing}
We write a line number for each statement, and write a blank line at the end of the program, which is used to simplify a conversion of \emph{while} programs to LCTRSs.
\end{example}

\subsection{Converting \emph{while} Programs to LCTRSs}
\label{subsec:while_to_lctrs}

In this section, we briefly introduce a conversion of \emph{while} programs to LCTRSs (see e.g.,~\cite{FKN17tocl}).

Let $P$ be a \emph{while} program such that $\FVar(P)=\{x_1,\ldots,x_n\}$ and $P$ has $m$ lines without any assertion.
We denote the sequence ``$x_1,\ldots,x_n$'' by $\vec{x}$.
We prepare $\sort{state}$, a sort for tuples of integers.
We assume that there is no blank line in $P$ with line numbers, except for the last line $m$ e.g., line 7 of $\Psum$.
We first prepare $m$ function symbols $\State_1, \ldots, \State_m$ with sort $\overbrace{\Int \times \cdots \times \Int}^n \funtype \sort{state}$.
Instances of $\State_1,\ldots,\State_m$ represent \emph{states} in executing $P$.
Here, a state consists of a program counter and an assignment to variables in the program (see e.g.,~\cite{BM07}).
For example, $\State_i(v_1,\ldots,v_n)$ represents a state such that the program counter stores $i$ and $v_1,\ldots,v_n$ are assigned to $x_1,\ldots,x_n$, resp.
For each statement in $P$, 
we generate constrained rewrite rules for $\State_1, \ldots, \State_m$ as follows:
\begin{itemize}
	\item an assignment \,\fbox{~\lnum{$i$}~~~~$x_k := e$;~} \,
	is converted to the following rule:
	\[
	\{~~
	\Rule{\State_i(\vec{x})}{\State_{i+1}(x_1,\ldots,x_{k-1},e,x_{k+1},\ldots,x_n)}
	~~\}
	\]
	\item a ``skip'' statement \,\fbox{\footnotesize~\lnum{$i$}~~~~$\Skip$;~}\,
	is converted to the following rule:
	\[
	\{~~
	\Rule{\State_i(\vec{x})}{\State_{i+1}(\vec{x})}
	~~\}
	\]
	\item an ``if'' statement
\,\fbox{%
\footnotesize
\begin{minipage}[t]{30ex}
\begin{tabbing}
00\=00 \= 00 \= 00 \= 00 \= 00 \kill
\> \lnum{$i$}\' \> $\ifff (\varphi)\{$~~~~~~~~~~\\
\> \lnum{\vdots}\' \>\> \raisebox{3pt}{$\cdots$} \\
\> \lnum{$j$}\' \> $\}\elseee\{$ \\
\> \lnum{\vdots}\' \>\> \raisebox{3pt}{$\cdots$} \\
\> \lnum{$k$}\' \>$\}$ 
\end{tabbing}
\end{minipage}
}\,
	is converted to the following rules:
	\[
\left\{
\begin{array}{r@{\>}c@{\>}ll@{}c@{}l@{~~~~~~~~~~~~~}r@{\>}c@{\>}ll@{}c@{}l}
\CRule{\State_i(\vec{x}) &}{& \State_{i+1}(\vec{x}) &}{& \varphi &} &
\Rule{\State_{j}(\vec{x}) &}{& \State_{k+1}(\vec{x})} \\
\CRule{\State_i(\vec{x}) &}{& \State_{j+1}(\vec{x}) &}{& \neg \varphi &} &
\Rule{\State_{k}(\vec{x}) &}{& \State_{k+1}(\vec{x})} \\
\end{array}
\right\}
	\]
	\item a ``while'' statement
\,\fbox{%
\footnotesize
\begin{minipage}[t]{30ex}
\begin{tabbing}
00\=00 \= 00 \= 00 \= 00 \= 00 \kill
\> \lnum{$i$}\' \> $\while\,@\zeta\, (\varphi)\{$~~~~~~~~~~\\
\> \lnum{\vdots}\' \>\> \raisebox{3pt}{$\cdots$} \\
\> \lnum{$j$}\' \> $\}$ 
\end{tabbing}
\end{minipage}
}\,
	is converted to the following rules:
	\[
\left\{
\begin{array}{r@{\>}c@{\>}ll@{}c@{}l@{~~~~~~~~~~~~~}r@{\>}c@{\>}ll@{}c@{}l}
\CRule{\State_i(\vec{x}) &}{& \State_{i+1}(\vec{x}) &}{& \varphi &} &
\Rule{\State_{j}(\vec{x}) &}{& \State_{i}(\vec{x})} \\
\CRule{\State_i(\vec{x}) &}{& \State_{j+1}(\vec{x}) &}{& \neg \varphi &} \\
\end{array}
\right\}
	\]
\end{itemize}
For brevity, we replace $\State_m$ in the final result by $\symb{end}$.
By definition, it is clear that any LCTRS obtained from a \emph{while} program by the above conversion is orthogonal.

\begin{example}
The program $\Psum$ 
 in Example~\ref{ex:sum} 
is converted to 
the following LCTRS:
\[
	\cRsum 
	=
	\left\{
	\begin{array}{r@{\>}c@{\>}ll@{}c@{}l}
\Rule{\State_1(x,i,z) &}{& \State_2(x,\symb{0},z)} \\
\Rule{\State_2(x,i,z) &}{& \State_3(x,i,\symb{0})} \\
\CRule{\State_3(x,i,z) &}{& \State_4(x,i,z) &}{& x > i &} \\
\CRule{\State_3(x,i,z) &}{& \symb{end}(x,i,z) &}{& \neg (x > i) &}\\
\Rule{\State_4(x,i,z) &}{& \State_5(x,i,z+i+\symb{1})} \\
\Rule{\State_5(x,i,z) &}{& \State_6(x,i+\symb{1},z)} \\
\Rule{\State_6(x,i,z) &}{& \State_3(x,i,z)}\\
	\end{array}
	\right\}
\]
$\cRsum$ is orthogonal (and thus, confluent), \emph{quasi-reductive} (i.e., every ground term with a defined symbol is reducible), and terminating.
Note that termination of $\cRsum$ can be proved by e.g., \textsf{Ctrl}~\cite{KN15lpar}.
\end{example}

\begin{theorem}[\cite{FNSKS08b}]
\label{thm:correctness_of_conversion}
Let $\cR_P$ be the LCTRS obtained from $P$ by the conversion in this section.
For all assignments $\theta,\theta'$ (for $\FVar(P)$),
$\theta \mathrel{\Eval{P}} \theta'$ iff $\State_1(\vec{x})\theta \mathrel{\to^*_{\cR_P}} \symb{end}(\vec{x})\theta'$.	
\end{theorem}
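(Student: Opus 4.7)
The plan is to prove, by structural induction on $P$, a stronger statement: for every sub-command $c$ of $P$ with entry line $i$ and exit line $j$ (the line immediately after $c$ in the induced line-numbering), and for all assignments $\theta,\theta'$ of $\FVar(P)$,
\[
\theta \mathrel{\Eval{c}} \theta' \quad \Longleftrightarrow \quad \State_i(\vec{x})\theta \mathrel{\to^*_{\cR_P}} \State_j(\vec{x})\theta'.
\]
The original theorem then follows by instantiating $c := P$, $i := 1$, $j := m$, after recalling that $\State_m$ has been renamed to $\symb{end}$ in the final step of the conversion.

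For the base cases, an assignment $x_k := e$ at line $i$ generates the unique rule $\State_i(\vec{x}) \to \State_{i+1}(x_1,\ldots,e,\ldots,x_n)$, whose application to $\State_i(\vec{x})\theta$, followed by calculation of $e\theta$ via $\cRcalc$, yields $\State_{i+1}(\vec{x})\theta'$ with $\theta' = \theta[x_k \mapsto \Interpret{e\theta}]$, matching the effect of the assignment exactly; the $\Skip$ case is analogous. For a sequence $c_1;c_2$, composing the two induction hypotheses suffices, since by construction the exit line of $c_1$ coincides with the entry line of $c_2$. For the \emph{if} statement I would split on whether $\varphi\theta$ evaluates to $\top$ or $\bot$: in each case exactly one of the two root-position rules at $\State_i$ applies (here orthogonality together with the exhaustive guards $\varphi,\neg\varphi$ is essential), after which the induction hypothesis on the relevant branch, followed by the jump rule bridging past the other branch, yields the required rewrite sequence; the converse uses the fact that only the guard-consistent rule is applicable.

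The \emph{while} case is the heart of the argument. For the $\Rightarrow$ direction I would perform a nested induction on the number of loop iterations, applying the outer induction hypothesis on the body in each iteration, inserting the loop-back rule $\State_j(\vec{x}) \to \State_i(\vec{x})$ between iterations, and terminating with the exit rule $\CRule{\State_i(\vec{x})}{\State_{j+1}(\vec{x})}{\neg\varphi}$ once the guard fails. The main obstacle is the $\Leftarrow$ direction: given an arbitrary rewrite sequence from $\State_i(\vec{x})\theta$ reaching the exit state, I need to decompose it into full iterations of the loop. The key observation is a locality property of the conversion—the rules generated for a sub-command only mention state symbols with indices inside that sub-command's range, plus its immediate exit line—so control flow is compositional; combined with orthogonality of $\cR_P$, this lets me identify iteration boundaries with the successive occurrences of $\State_i$ in the sequence and apply a nested induction on sequence length. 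I would also note that calculation steps on the arguments of $\State_h$ commute with the root-level state-transition rules, so any reduction can be normalized into a canonical form where calculations are performed eagerly and each $\State_h$ always appears with values as arguments, making the correspondence with concrete assignments $\theta$ well-defined.
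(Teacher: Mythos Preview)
The paper does not prove this theorem at all: it is imported from~\cite{FNSKS08b}, as indicated by the citation in the theorem header, and no proof environment follows the statement. So there is no ``paper's own proof'' to compare your proposal against.

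That said, your proposal is the standard and correct approach for this kind of adequacy result. The strengthened induction hypothesis (parametrised by entry and exit lines of arbitrary sub-commands) is exactly what is needed, and your treatment of the base cases, sequencing, and conditionals is sound. Your identification of the $\Leftarrow$ direction for \emph{while} as the crux is accurate: the locality property you isolate---that rules generated for a sub-command mention only state symbols in its own line range plus its immediate exit---is the right structural invariant, and together with orthogonality it does let you carve an arbitrary reduction into iteration-shaped segments. The remark on commuting calculation steps to normalise argument positions to values is also the right hygiene step. One minor point worth making explicit in a full write-up is that the exit line $j$ of a sub-command need not be $i+1$ for the entry line $i$ of the textually next command (because of the jump rules at the end of branches and loop bodies), so the ``exit line of $c_1$ equals entry line of $c_2$'' claim in the sequencing case should be justified via the line-numbering convention rather than asserted; but this is bookkeeping, not a gap.
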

Note that the execution of $P$ starting with $\theta$ does not halt iff $\State_1(\vec{x})\theta$ does not terminate on $\cR_P$.
It follows from Theorem~\ref{thm:correctness_of_conversion} that if $\cR_P$ is terminating, then any execution of $P$ halts.
On the other hand, the converse does not hold for all \emph{while} programs, i.e., the conversion above does not preserve termination of $P$ (see, e.g.,~\cite{NK14wst}).%
\footnote{
When replacing $x > i$ in $\Psum$ by $x \ne i$, the constructed LCTRS $\cRsum'$  is the one obtained from $\cRsum$ by replacing $x > i$ by $x \ne i$.
LCTRS $\cRsum'$ is not terminating because we have an infinite reduction sequence from, e.g., $\State_3(\symb{0},\symb{1},\symb{0})$.
}

\section{Proof Tableaux of Hoare Logic}
\label{sec: hoare}

\emph{Hoare logic} is a logic to prove a Hoare triple
to hold (see e.g.,~\cite{Rey98}).
A triple $\triple{\varphi}{P}{\psi}$ for partial correctness is said to \emph{hold} (or $P$ is \emph{partially correct} w.r.t.\ pre- and post-conditions $\varphi$, $\psi$) if for any initial state satisfying $\varphi$, the final state of the execution satisfies $\psi$ whenever the execution from the initial state halts.
A triple $\ttriple{\varphi}{P}{\psi}$ for total correctness is said to \emph{hold} (or $P$ is \emph{totally correct} w.r.t.\ pre- and post-conditions $\varphi$, $\psi$) if for any initial state satisfying $\varphi$, the execution from the initial state halts and the final state of the execution satisfies $\psi$.
Note that total correctness is equivalent to partial correctness with termination of the program under the pre-condition.

In this section, we formalize proof tableaux of Hoare triples.
The aim of this paper is to transform a proof tableau of a Hoare triple for partial correctness into an inference sequence of RI (shown in Section~\ref{sec: ri}).
For this reason, we consider proof tableaux for partial correctness and we do not focus on the construction of proof tableaux.

In the following, we consider \emph{while} programs as sequences of commands connected by ``;'', and we write $P$ as $C_1;C_2;\ldots;C_n$.
Note that we consider ``;'' to implicitly exist at the end of ``if'' and ``while'' statements.
Bodies of ``if'' and ``while'' statements are also considered sequences of commands. 

\begin{definition}
An annotated \emph{while} program $P$ is called a \emph{proof tableau} if all of the following hold:
\begin{itemize}
	\item every longest command-(sub)sequence in $P$ has the length more than two, and the head and last elements of the sequence are annotations,
		e.g., 
		$P$ is of the form $@\,\varphi;C_1;\ldots;C_n;@\,\psi$ ($n > 0$),
	\item for each subsequence $@\,\varphi;@\,\psi$ of annotations, the formula $\varphi \implies \psi$ is valid, and
	\item for each subsequence $C_1;C_2;C_3$, if $C_2$ is not an annotation, then the first and third elements $C_1$, $C_3$ are annotations such that
	\begin{itemize}
		\item if\/ $C_2$ is an assignment $x := e$, then $C_1$ is $C_3\{x \mapsto e\}$, 
		\item if\/ $C_2$ is $\Skip$, then $C_1$ and $C_3$ are equivalent,
		\item if\/ $C_2$ is of the form $\ifff ( \psi ) \{ S' \} \elseee \{ S'' \}$ and $C_1$ is of the form $@\,\varphi$, then the head of $S'$ is $@\,\varphi \wedge \psi$, the head of $S''$ is $@\,\varphi \wedge \neg \psi$, and $C_3$ and the last elements of both $S'$ and $S''$ are equivalent, i.e., $C_1;C_2;C_3$ is of the form
			\[
			@\varphi;\, \ifff ( \psi ) \{ @\varphi\wedge\psi;\ldots;\,@\xi \} \elseee \{ @\varphi\wedge\neg\psi;\ldots;\,@\xi \};\,@\xi
			\]
			and
		\item if\/ $C_2$ is of the form $\while\,@ \zeta\,( \varphi ) \{ S\}$, then $C_1$ and the last element of the sequence $S$ are $@\zeta$, and the head element of $S$ is $@\,\zeta \wedge \varphi$, and $C_3$ is $@\, \zeta \wedge \neg \varphi$, i.e., $C_1;C_2;C_3$ is of the form 
			\[
			@\zeta;\,\while\,@\zeta\,( \varphi ) \{ @\zeta\wedge\varphi;\ldots;\,@\zeta\};\,@\zeta\wedge\neg\varphi.
			\]
	\end{itemize}
\end{itemize}
\end{definition}
Note that a proof tableau is a tableau representation of an inference tree constructed by basic inference rules of Hoare logic illustrated in Figure~\ref{fig:Hoare} (see e.g.,~\cite{Rey98}).
\begin{figure}[tb]
\[
\frac{~\mbox{$\varphi \implies \varphi'$ is valid}~~~~\triple{\varphi'}{C}{\psi'}~~~~ \mbox{$\psi' \implies \psi$ is valid}~}{~\triple{\varphi}{C}{\psi}~}
~~~~
\frac{}{~\triple{\varphi\{v \mapsto e\}}{v := e}{\varphi}~}
\]
\[
\frac{}{~\triple{\varphi}{\Skip}{\varphi}~}
~~~~
\frac{~\triple{\varphi}{C_1}{\xi}~~~~\triple{\xi}{C_2}{\psi}~}{~\triple{\varphi}{C_1;\,C_2}{\psi}~}
\]
\[
\frac{~\triple{\varphi \land \psi}{C_1}{\xi}~~~~\triple{\varphi \land \neg \psi}{C_2}{\xi}~}{~\triple{\varphi}{\ifff ( \psi ) \{ C_1 \} \elseee \{ C_2 \} }{\xi}~}
~~~~
\frac{~\triple{\zeta \land \psi}{C}{\zeta}~}{~\triple{\zeta}{\while\,\mbox{$@\,\zeta$}\, ( \psi ) \{ C \} }{\zeta \land \neg \psi }~}
\]
\vspace{-12pt}
\caption{basic inference rules of Hoare logic.}
\label{fig:Hoare}
\end{figure}

\begin{example}
The annotated \emph{while} program of Figure~\ref{fig:Ptab}, denoted by $\Psumtab$, is a proof tableau for the Hoare triple $\triple{x \geq 0}{\Psum}{z = \frac{1}{2}x(x + 1)}$, where the original line numbers for $\Psum$ are left.
\end{example}

\begin{figure}[t]
\begin{multicols}{2}
\begin{tabbing}
0000 \=00 \= 00 \= 00 \kill
\> \anum{A1} \' \> \Emph[Navy]{@\, $x \geq 0$;} \\
\> \anum{A2} \'\> \Emph[Navy]{@\, $x \geq 0 \land 0 = 0$;} \\
\> \lnum{1} \' \> $i := 0;$\\
\> \anum{A3} \'\> \Emph[Navy]{@\, $x \geq 0 \land i = 0$;} \\
\> \anum{A4} \'\> \Emph[Navy]{@\, $x \geq 0 \land i = 0 \land 0 = 0$;}\\
\> \lnum{2} \' \> $z := 0;$\\
\> \anum{A5} \'\> \Emph[Navy]{@\, $x \geq 0 \land i = 0 \land z = 0$;}\\
\> \anum{A6} \'\> \Emph[Navy]{@\, $z = \frac{1}{2}i(i+1) \land x \geq i$;}\\
\> \lnum{3} \' \> $\while\,\Emph[Navy]{@ z = \frac{1}{2}i(i+1) \land x \geq i}\, ~  (x > i)\{$\\
\> \anum{A7} \'\> \> \Emph[Navy]{@\, $z = \frac{1}{2}i(i+1) \land x \geq i \land x > i$;}
\end{tabbing}
\columnbreak
\begin{tabbing}
000 \=00 \= 00 \= 00 \kill
\> \anum{A8} \'\> \> \Emph[Navy]{@\, $z + i + 1 = \frac{1}{2}(i + 1)(i + 2) \land x \geq i + 1$;} \\
\> \lnum{4} \' \> \> $z := z + i + 1;$\\
\> \anum{A9} \'\> \> \Emph[Navy]{@\, $z = \frac{1}{2}(i + 1)(i + 2) \land x \geq i + 1$;} \\
\> \lnum{5} \' \> \> $i := i + 1;$\\
\> \anum{A10} \'\> \> \Emph[Navy]{@\, $z = \frac{1}{2}i(i + 1) \land x \geq i$;} \\
\> \lnum{6} \' \> $\}$\\
\> \anum{A11} \'\> \Emph[Navy]{@\, $z = \frac{1}{2}i(i + 1) \land x \geq i \land \neg (x > i)$;} \\[1pt]
\> \anum{A12} \'\> \Emph[Navy]{@\, $z = \frac{1}{2}x(x + 1)$;} \\
\> \lnum{7} \'
\end{tabbing}
\end{multicols}
\vspace{-12pt}
\caption{an annotated \emph{while} program $\Psumtab$ for $\Psum$.}
\label{fig:Ptab}
\end{figure}

\section{Rewriting Induction on LCTRSs}
\label{sec: ri}

In this section, we recall the framework of \emph{rewriting induction} (RI) for LCTRSs~\cite{FKN17tocl}. 

A \emph{constrained equation} is a triple 
$\CEqn{s}{t}{\varphi}$. 
We may simply write $\Eqn{s}{t}$ instead of $\CEqn{s}{t}{\varphi}$ if $\varphi$ is $\symb{true}$.
We write $\CEqn[\simeq]{s}{t}{\varphi}$ to denote either $\CEqn{s}{t}{\varphi}$ or $\CEqn{t}{s}{\varphi}$.
A substitution $\gamma$ is said to \emph{respect} $\CEqn{s}{t}{\varphi}$ if $\gamma$ respects $\varphi$ and $\FVar(s) \cup \FVar(t) \subseteq \Dom(\gamma)$, and to be a \emph{ground constructor substitution} if all $\gamma(x)$ with $x\in \Dom(\gamma)$ are ground constructor terms.
An equation $\CEqn{s}{t}{\varphi}$ is called an \emph{inductive theorem} of an LCTRS $\cR$ if $s\gamma \mathrel{\leftrightarrow^*_\cR} t\gamma$ for any ground constructor substitution $\gamma$ that respects $\CEqn{s}{t}{\varphi}$.

As in~\cite{FKN17tocl}, we restrict LCTRSs to be terminating and quasi-reductive.
An RI-based method is to construct an \emph{inference sequence} by applying the following basic inference rules to pairs $(\cE,\cH)$ of finite sets $\cE$ and $\cH$ of constrained equations and rewrite rules, resp.:
\begin{description}
 \item[]{\Expansion}
    \[
             (\cE\uplus\{ \CEqn[\simeq]{s}{t}{\varphi} \},\cH) \mathrel{\RIstep} (\cE\cup \Expd_\cR(\CEqn{s}{t}{\varphi},p), \cH\cup\{ \CRule{s}{t}{\varphi} \})
    \]
        where 
        \begin{itemize}
        	\item $p$ is a \emph{basic} position of $s$,%
        	\footnote{
        	A position of $p$ of term $s$ is \emph{basic} if $s|_p$ is of the form $f(s_1,\ldots,s_n)$ with $f$ a defined symbol and $s_1,\ldots,s_n$ constructor terms.} 
	        \item $\cR\cup\cH\cup\{ \CRule{s}{t}{\varphi} \}$ is terminating, and
       		\item $\Expd_\cR(\CEqn{s}{t}{\varphi},p)$ is the set of constrained equation $\CEqn{s'}{t'}{\varphi'}$ such that $\CEqn{s\gamma}{t\gamma}{\varphi\gamma \wedge \psi\gamma} \mathrel{\to_{1.p,\CRule{\ell}{r}{\psi}}} \CEqn{s'}{t'}{\varphi'}$ for some renamed variant $\CRule{\ell}{r}{\psi}$ of a rule in $\cR$ (i.e., $\FVar(\ell,r,\psi)\cap \FVar(s,t,\varphi)=\emptyset$) and a most general unifier $\gamma$ of $s|_p$ and $\ell$.
       	\end{itemize}
Note that $\approx$ is considered a binary function symbol in constrained rewriting.


 \item[]{\Simplification}
    \[
        (\cE\uplus\{ \CEqn[\simeq]{s}{t}{\varphi} \},\cH) \mathrel{\RIstep} (\cE\cup\{ \CEqn{u}{t}{\psi} \}, \cH)
    \]
        where $\CTerm{s}{\varphi} \mathrel{\to_{\cR\cup \cH}} \CTerm{u}{\psi}$.

 \item[]{\Deletion}
    \[
            (\cE\uplus\{ \CEqn[\approx]{s}{t}{\varphi} \},\cH) \mathrel{\RIstep} (\cE,\cH)
    \]
        where $s = t$ or $\varphi$ is not satisfiable.
        

\end{description}
In addition to the above, we use the following inference rules:
\begin{description}
 \item[]{\CaseSplitting}
   \[
             (\cE\uplus\{ \CEqn[\simeq]{s}{t}{\varphi} \},\cH) \mathrel{\RIstep} (\cE\cup \Expd_\cR(\CEqn{s}{t}{\varphi},p), \cH)
    \]
        where 
        	$p$ is a \emph{basic} position of $s$.
        Note that {\CaseSplitting} is a variant of {\Expansion} without adding  $\CRule{s}{t}{\varphi}$ to $\cH$.

	\item[]{\Generalization}
	\[
	(\cE\uplus\{ \CEqn{s}{t}{\varphi} \}, \cH) \mathrel{\RIstep} (\cE\cup\{ \CEqn{s}{t}{\psi} \}, \cH)
	\]
	where $\varphi \implies \psi$ is valid.
	Note that this is a simpler version of the original one in~\cite{FKN17tocl}.
\end{description}
A pair $(\cE,\cH)$ is called a \emph{process} of RI.
Starting with $(\cE,\emptyset)$, we apply the inference rules above to  processes of RI.
If we get $(\emptyset,\cH)$, then all the equations in $\cE$ are proved to be inductive theorems of $\cR$.

Next, we revisit the role of termination in the RI method.
When we apply {\Expansion} to $(\cE_i,\cH_i)$, we prove termination of $\cR \cup \cH_i \cup \{ \CRule{s}{t}{\varphi} \}$.
This is necessary to avoid both constructing an incorrect inference sequence and applying {\Simplification} infinitely many times.
However, from theoretical viewpoint, it suffices to prove termination of $\cR\cup\cH$ after constructing an inference sequence $(\cE,\emptyset) \mathrel{\RIstep} \cdots \mathrel{\RIstep} (\emptyset,\cH)$. 
In this paper, we drop termination of $\cR\cup\cH\cup\{\CRule{s}{t}{\varphi}\}$ from the side condition of {\Expansion}.
Due to this relaxation, 
a constructed inference sequence does not always ensure that $\cE$ is a set of inductive theorems of $\cR$.
For this reason, we introduce the notion of \emph{valid inference sequences}.
An inference sequence $(\cE,\emptyset) \mathrel{\RIstep} \cdots \mathrel{\RIstep} (\emptyset,\cH)$ is called \emph{valid} if $\cR\cup\cH$ is terminating. 
\begin{theorem}[\cite{FKN17tocl}]
Let $\cR$ be an LCTRS and $\cE$ a finite set of equations.
If we have a valid inference sequence $(\cE,\emptyset) \mathrel{\RIstep} \cdots \mathrel{\RIstep} (\emptyset,\cH)$, then every equation in $\cE$ is an inductive theorem of $\cR$.
\end{theorem}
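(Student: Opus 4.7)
The plan is to prove the theorem by a minimal counterexample argument using termination of $\cR \cup \cH$. Assume the conclusion fails, so some equation in $\cE$ admits a ground constructor instance that is not $\leftrightarrow^*_\cR$-joinable. I would collect all bad triples $(\CEqn{s}{t}{\varphi}, \gamma)$ such that $\CEqn{s}{t}{\varphi}$ appears (up to swapping sides) in some process $(\cE_i, \cH_i)$ along the sequence and $\gamma$ is a ground constructor substitution respecting $\varphi$ with $s\gamma \not\leftrightarrow^*_\cR t\gamma$. Choose a minimal such triple with respect to the multiset extension of $\to^+_{\cR \cup \cH}$ applied to $\{s\gamma, t\gamma\}$; this minimum exists because validity of the sequence makes $\to_{\cR \cup \cH}^+$ well-founded.

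Then I would perform a case analysis on the inference step that first modifies this equation as one scans the sequence forward. Deletion immediately contradicts the existence of $\gamma$. Generalization preserves $(s, t, \gamma)$ into a later process under a weaker constraint, reducing to the next modification step. For CaseSplitting and Expansion, basicness of the position $p$ together with quasi-reductivity provides a rule of $\cR$ firable on $s\gamma$ at $p$; the mgu-based construction of $\Expd_\cR$ exhibits the resulting reduct as a ground instance of some equation in $\Expd_\cR(\CEqn{s}{t}{\varphi}, p)$, whose left-hand side is strictly $\to^+_\cR$-below $s\gamma$. By minimality this new triple is joinable, and combining with the single $\cR$-step gives $s\gamma \leftrightarrow^*_\cR t\gamma$, a contradiction. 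For Simplification, the step $\CTerm{s}{\varphi} \to_{\cR \cup \cH} \CTerm{u}{\psi}$ lifts to $s\gamma \to_{\cR \cup \cH} u\gamma'$ for a suitable $\gamma'$, producing a strictly smaller triple $(\CEqn{u}{t}{\psi}, \gamma')$; by minimality $u\gamma' \leftrightarrow^*_\cR t\gamma$, so provided the rewrite step itself corresponds to an inductive identity we get $s\gamma \leftrightarrow^*_\cR t\gamma$.

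The hard part is justifying Simplification when the applied rule belongs to $\cH$ rather than $\cR$: we need the specific ground instance $\ell'\sigma' \to r'\sigma'$ to witness $\leftrightarrow^*_\cR$. Since $\cH$-rules only originate from Expansion, the equation $\CEqn{\ell'}{r'}{\xi}$ appears in our collection, and the task is to show the ground instance $(\CEqn{\ell'}{r'}{\xi}, \sigma')$ lies strictly below the chosen minimum in the order, so that minimality supplies $\ell'\sigma' \leftrightarrow^*_\cR r'\sigma'$. This reduction is exactly where the choice of ordering matters: by using a reduction order compatible with the subterm relation---standard for termination certificates produced by LCTRS termination tools such as \textsf{Ctrl}---one obtains $\{\ell'\sigma', r'\sigma'\}$ strictly below $\{s\gamma, t\gamma\}$ from the fact that $\ell'\sigma'$ is a subterm of $s\gamma$ and that the $\cH$-step makes $r'\sigma'$ strictly below $\ell'\sigma'$. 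A secondary bookkeeping task is to thread ground substitutions correctly through the $\sim$-equivalence defining $\to_{\cR \cup \cH}$ on constrained terms and to handle fresh right-hand-side variables via the $\LVar$ convention that constrains them to values.
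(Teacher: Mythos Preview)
The paper does not prove this theorem at all: it is stated with the citation \cite{FKN17tocl} and imported as a black box, so there is no ``paper's own proof'' to compare against. Your sketch is therefore not a reconstruction but an independent proof, and it follows the standard soundness argument for rewriting induction (minimal counterexample with respect to a well-founded order derived from termination of $\cR\cup\cH$, followed by a case analysis on the inference rules). That overall architecture is correct and is essentially how the cited reference proceeds.

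There is one genuine wobble you should fix. You begin by taking the order to be the multiset extension of $\to^+_{\cR\cup\cH}$, but in the \Simplification-with-$\cH$ case you switch to ``a reduction order compatible with the subterm relation'' and justify this by appealing to what termination tools happen to output. That is not admissible: the hypothesis is merely that $\cR\cup\cH$ is terminating, not that termination was certified by a simplification order. The clean repair is to fix, from the outset, the order ${\succ} = (\to_{\cR\cup\cH}\cup\rhd)^+$ and use its multiset extension. Well-foundedness of $\succ$ follows from termination of $\to_{\cR\cup\cH}$ together with closure of $\to_{\cR\cup\cH}$ under contexts (a standard lemma: if a term is $\to$-terminating then so is every subterm of every $\to$-reduct). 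With this choice, the $\cH$-instance $\{\ell'\sigma',r'\sigma'\}$ is strictly below $\{s\gamma,t\gamma\}$ because $\ell'\sigma' \unlhd s\gamma$ and $r'\sigma'$ is a one-step $\cH$-reduct of $\ell'\sigma'$, exactly as you need. The remaining ``bookkeeping'' you flag about threading ground substitutions through $\sim$ is real but routine once the order is set up correctly; just be careful that the carried-over side $t$ keeps the same ground instance across the $\sim$-steps.
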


\section{Transforming a Proof Tableau into an Inference Sequence of RI}
\label{sec: transform}

In this section, using the proof tableau $\Psumtab$, we first illustrate a transformation of a proof tableau into an inference sequence of RI, and then formalize the transformation.

\subsection{Overview}
\label{subsec:overview}

Let us recall the LCTRS $\cRsum$ in 
Example~\ref{ex:sum}
and the proof tableau $\Psumtab$ in Figure~\ref{fig:Ptab}.
To verify the post-condition after the execution of $\Psum$, we prepare the following rules with a new symbol $\Check: \sort{state} \funtype \BOOL$:  
\[
\cRsumcheck =
\left \{
\begin{array}{r@{\>}c@{\>}l@{\>}c@{}c@{}c}
\CRule{\Check(\symb{end}(x,i,z)) &}{& \symb{true} &}{& z = \frac{\symb{1}}{\symb{2}}x(x+\symb{1}) &}\\
\CRule{\Check(\symb{end}(x,i,z)) &}{& \symb{false} &}{& \neg (z = \frac{\symb{1}}{\symb{2}}x(x+\symb{1})) &}\\
\end{array}
\right\}
\]
We let $\cRsumverif=\cRsum \cup \cRsumcheck$.
To prove the Hoare triple $\triple{x \geq 0}{\Psum}{z = \frac{1}{2}x(x+\symb{1})}$ to hold, it suffices to consider initial states satisfying the pre-condition $x \geq 0$, and thus, we prove the following equation an inductive theorem of $\cRsumverif$:
\[
\mathrm{(A1)} ~~~~ \CEqn{\Check(\State_1(x,i,z))}{\symb{true}}{ x \geq 0 }
\]
It is clear that $\cRsumverif$ is quasi-reductive.

From now on, we 
transform the proof tableau $\Psumtab$ into an inference sequence of RI for $\cRsumverif$ in a \emph{top-down} fashion. 
The construction is independent of termination of $\cRsumverif$ with generated rules, and thus the construction itself does not ensure validity of the resulting inference sequence.

We start with the 
initial process
$(\{~\mathrm{(A1)}~\}, \emptyset)$.
Line \Lnum{A2} of $\Psumtab$ is an assertion $@\, {x \geq 0} \land {0 = 0}$ and the validity of ${x \geq 0} \implies {x \geq 0} \land {0 = 0}$ is guaranteed by the fact that $\Psumtab$ is a proof tableau.
Using the validity, we can generalize 
(A1) by applying {\Generalization} to the above process:
\[
\left( \left\{
~\mathrm{(A2)} ~~~~ \CEqn{\Check(\State_1(x,i,z))}{\symb{true}}{{x \geq \symb{0}} \Emph{{} \land {\symb{0} = \symb{0}}}}
\right\} ,  \emptyset \right)
\]

Let us recall the inference rule of assignment in Hoare logic (Figure~\ref{fig:Hoare}).
For an assignment $x_k := e$ on line \Lnum{$j$}, a rewrite rule $\Rule{\State_j(\vec{x})}{\State_{j+1}(x_1,\ldots,x_{k-1},e,x_{k+1},\ldots,x_n)}$ is generated, and thus, we have 
the derivation
$\CTerm{\State_j(\vec{x})}{\varphi\{x_k\mapsto e\}} \mathrel{\to_\cR} \CTerm{\State_{j+1}(\vec{x})}{\varphi}$
because 
$ 
\CTerm{\State_j(\vec{x})}{\varphi\{x_k\mapsto e\}} \mathrel{\to_{\mathtt{base}}} \CTerm{\State_{j+1}(x_1,\ldots,x_{k-1},e,x_{k+1},\ldots,x_n)}{\varphi\{x_k\mapsto e\}} \mathrel{\sim} 
\CTerm{\State_{j+1}(\vec{x})}{\varphi}
$. 
Line 1 of $\Psumtab$ is an assignment $i := 0$, 
and hence, $\CTerm{\State_1(x,i,z)}{{x \geq \symb{0}} \land {\symb{0} = \symb{0}}} \mathrel{\to_{\cRsumverif}} \CTerm{\State_2(x,i,z)}{{x \geq \symb{0}} \land {i = \symb{0}}}$.
Thus, we can simplify 
(A2) by applying {\Simplification} to the above process:
\[
\left( \left\{
~\mathrm{(A3)} ~~~~ \CEqn{\Check(\Emph{\State_2(x,i,z)})}{\symb{true}}{x \geq \symb{0} \land \Emph{i} = \symb{0}}
\right\} ,  \emptyset \right)
\]
Line \Lnum{A4} of $\Psumtab$ is $@\, x \geq 0 \land i = 0 \land 0 = 0$ 
and we can generalize 
(A3) by applying {\Generalization}: 
\[
\left( \left\{
~\mathrm{(A4)} ~~~~ \CEqn{\Check(\State_2(x,i,z))}{\symb{true}}{x \geq \symb{0} \land i = \symb{0} \Emph{{} \land {\symb{0} = \symb{0}}}}
\right\} ,  \emptyset \right)
\]
Line 2 of $\Psumtab$ is an assignment $z := 0$, 
and we can simplify 
(A4) by applying {\Simplification}: 
\[
\left( \left\{
~\mathrm{(A5)} ~~~~ \CEqn{\Check(\Emph{\State_3(x,i,z)})}{\symb{true}}{x \geq \symb{0} \land i = \symb{0} \land \Emph{z} = \symb{0}}
\right\} ,  \emptyset \right)
\]
Line \Lnum{A6} of $\Psumtab$ is $@\, z = \frac{1}{2}i(i+1) \land x \geq i$
and we can generalize 
(A5) by applying {\Generalization}: 
\[
\left( \left\{
~\mathrm{(A6)} ~~~~ \CEqn{\Check(\State_3(x,i,z))}{\symb{true}}{\Emph{z = \textstyle \frac{\symb{1}}{\symb{2}}i(i+\symb{1}) \land x \geq i}}
\right\} ,  \emptyset \right)
\]

Line 3 of $\Psumtab$ is a ``while'' statement. 
At this point, we have two branches:
the one entering the loop (i.e., executing the body of the loop) and 
the other exiting the loop.
For the case analysis, we apply {\Expansion} to 
(A6), getting the following two equations and one oriented equation:
\[
\left( 
\left\{
\begin{array}{r@{~~~~}r@{\>}c@{\>}l@{\>}c@{}c@{}c}
\Emph{\mathrm{(A7)}} & 
\CEqn{\Check(\State_4(x,i,z)) &}{&\symb{true} &}{&  z = \frac{\symb{1}}{\symb{2}}i(i+\symb{1}) \land x \geq i \land {x > i} &}\\
\Emph{\mathrm{(A11)}} & 
\CEqn{\Check(\symb{end}(x,i,z)) &}{&\symb{true} &}{& z = \frac{\symb{1}}{\symb{2}}i(i+\symb{1}) \land x \geq i \land \lnot (x > i)&}
\end{array}
\right\},  
\left\{
~
\Emph{\mathrm{(A6)}}
~
\right\} 
 \right)
\]
where (A6) is oriented from left to right.
The first equation represents the case where the loop body is executed, and the second one represents the case where we exit from the loop.

Line \Lnum{A8} of $\Psumtab$ is an assertion and we can generalize 
(A7) by applying {\Generalization}: 
\[
\left( 
\left\{
\begin{array}{r@{~~~~}r@{\>}c@{\>}l@{\>}c@{}c@{}c@{,~~}c}
\mathrm{(A8)} & \CEqn{\Check(\State_4(x,i,z)) &}{&\symb{true} &}{& \Emph{z + i + \symb{1} = \frac{\symb{1}}{\symb{2}}(i+\symb{1})(i+\symb{2}) \land {x \geq i+\symb{1}}}&}
& \mathrm{(A11)} 
\end{array}
\right\},  
\left\{
~
\mathrm{(A6)}
~
\right\} 
 \right)
\]
Line 4 of $\Psumtab$ is an assignment $z := z + i + 1$
and we can simplify 
(A8) by applying {\Simplification}: 
\[
\left( 
\left\{
\begin{array}{r@{~~~~}r@{\>}c@{\>}l@{\>}c@{}c@{}c@{,~~}c}
\mathrm{(A9)} & \CEqn{\Check(\Emph{\State_5(x,i,z)}) &}{&\symb{true} &}{& \Emph{z} = \frac{\symb{1}}{\symb{2}}(i+\symb{1})(i+\symb{2}) \land x \geq i+\symb{1}&}
&
\mathrm{(A11)} 
\end{array}
\right\} ,  
\left\{
~
\mathrm{(A6)}
~
\right\} 
 \right)
\]
Line 5 of $\Psumtab$ is an assignment $i := i + 1$
and we can simplify 
(A9) by applying {\Simplification}: 
\[
\left( 
\left\{
\begin{array}{r@{~~~~}r@{\>}c@{\>}l@{\>}c@{}c@{}c@{,~~}c}
\mathrm{(A10)} & \CEqn{\Check(\Emph{\State_6(x,i,z)}) &}{&\symb{true} &}{& z = \frac{\symb{1}}{\symb{2}}\Emph{i}(\Emph{i+\symb{1}}) \land x \geq \Emph{i} &}
&
\mathrm{(A11)} 
\end{array}
\right\} ,  
\left\{
~
\mathrm{(A6)}
~
\right\} 
 \right)
\]
Line 6 of $\Psumtab$ is the end of the loop and we can apply the rule $\Rule{\State_6(x,i,z)}{\State_3(x,i,z)}$ that makes the left-hand side of (A10) go back to the beginning of the loop.
Thus, we can simplify 
(A10): 
\[
\left( 
\left\{
\begin{array}{r@{~~~~}r@{\>}c@{\>}l@{\>}c@{}c@{}c@{,~~}c}
\mathrm{(B1)} & \CEqn{\Check(\Emph{\State_3(x,i,z)}) &}{&\symb{true} &}{& z = \frac{\symb{1}}{\symb{2}}i(i+\symb{1}) \land x \geq i &}
&
\mathrm{(A11)} 
\end{array}
\right\} ,  
\left\{
~
\mathrm{(A6)}
~
\right\} 
 \right)
\]
The equation (B1) means that we reach the beginning of the loop after the one execution of the body.
Moreover, 
(B1) is the same as (A6) due to the loop invariant, and hence the induction hypothesis (A6) is applicable to (B1).
Thus, we can simplify 
(B1) by applying {\Simplification} to the above process with rule (A6) $\CRule{\Check(\State_3(x,i,z))}{\symb{true}}{z = \frac{\symb{1}}{\symb{2}}i(i+\symb{1}) \land x \geq i}$:
\[
\left( 
\left\{
\begin{array}{r@{~~~~}r@{\>}c@{\>}l@{\>}c@{}c@{}c@{,~~}c}
\mathrm{(B2)} & \CEqn{\Emph{\symb{true}} &}{&\symb{true} &}{& z = \frac{\symb{1}}{\symb{2}}i(i+\symb{1}) \land x \geq i &}
&
\mathrm{(A11)} 
\end{array}
\right\} ,  
\left\{
~
\mathrm{(A6)}
~
\right\} 
\right)
\]
The both sides of 
(B2) are equivalent
and we can delete 
(B2) by applying {\Deletion}: 
\[
\left( 
\left\{
\begin{array}{r@{~~~~}r@{\>}c@{\>}l@{\>}c@{}c@{}c}
\mathrm{(A11)} & \CEqn{\Check(\symb{end}(x,i,z)) &}{&\symb{true} &}{& z = \frac{\symb{1}}{\symb{2}}i(i+\symb{1}) \land x \geq i \land \lnot (x > i)&}
\end{array}
\right\} ,  
\left\{
~
\mathrm{(A6)}
~
\right\} 
\right)
\]

The remaining equation (A11) represents the state after exiting the loop.
The last line of $\Psumtab$ is an assertion corresponding to the post-condition.
Due to the validity of $z = \frac{1}{2}i(i + 1) \land x \geq i \land \neg (x > i) \implies z = \frac{1}{2}x(x + 1)$, we can generalize 
(A11) by applying {\Generalization}: 
\[
\left( 
\left\{
\begin{array}{r@{~~~~}r@{\>}c@{\>}l@{\>}c@{}c@{}c}
\mathrm{(B3)} & 
\CEqn{\Check(\symb{end}(x,i,z)) &}{&\symb{true} &}{& \Emph{z = \frac{\symb{1}}{\symb{2}}x(x+\symb{1})} &}
\end{array}
\right\} ,  
\left\{
~
\mathrm{(A6)}
~
\right\} 
\right)
\]
The constraints of 
(B3) and the post-condition of $\Psumtab$ are equivalent and we can apply the first rule of $\cRsumcheck$ to the left-hand side of (B3) in order to verify the post-condition.
Thus, we can simplify 
(B3) by applying {\Simplification} with rule $\CRule{\Check(\symb{end}(x,i,z))}{\symb{true}}{z = \frac{\symb{1}}{\symb{2}}x(x+\symb{1})}$ in $\cRsumcheck$ to the above process:
\[
\left( 
\left\{
\begin{array}{r@{~~~~}r@{\>}c@{\>}l@{\>}c@{}c@{}c}
\mathrm{(B4)} & 
\CEqn{\Emph{\symb{true}} &}{&\symb{true} &}{& z = \frac{\symb{1}}{\symb{2}}x(x+\symb{1}) &}
\end{array}
\right\} ,  
\left\{
~
\mathrm{(A6)}
~
\right\} 
\right)
\]
The both sides of 
(B4) are equivalent 
and we can delete 
(B4) by applying {\Deletion}: 
\[
\left( 
\emptyset
,  
\left\{
~
\mathrm{(A6)}
~
\right\} 
\right)
\]
In the above illustration, we did not show the case of ``if'' statements.
However, the missing case is a simpler one of ``while'' statements, where we use {\CaseSplitting} instead of {\Expansion}.

Finally, we show that $\cRsumverif\cup \{~\mathrm{(A6)}~\}$ is terminating.
Since any term with sort $\sort{state}$ or $\BOOL$ does not appear in $\cRsum$ as a proper subterm, 
$\cRsumcheck\cup \{~\mathrm{(A6)}~\}$ does not introduce non-termination into 
$\cRsum$.
As described before, $\cRsum$ is terminating and hence $\cRsumverif\cup \{~\mathrm{(A6)}~\}$ is so.


\subsection{Formalization}

In this section, we formalize the idea illustrated in the previous section.
In the following, we consider 
\begin{itemize}
	\item a \emph{while} program $P$ such that $\FVar(P) = \{ x_1,\ldots,x_n \}$,
	\item a proof tableau $T_P$ for a Hoare triple $\triple{\varphi_P}{P}{\psi_P}$,%
	\footnote{
	Note that $P$ is the same as the \emph{while} program obtained from $T_P$ by removing assertions.}
			and
	\item the LCTRS $\cR_P$ obtained from $P$ by the conversion in Section~\ref{subsec:while_to_lctrs}.
\end{itemize}
We denote the sequence $x_1,\ldots,x_n$ by $\vec{x}$.
Unlike previous sections, we specify line numbers for $T_P$, and reuse them in converting $P$ to $\cR_P$.
For this reason, the function symbol to represent initial states is not $\State_1$ but $\State_{i_0}$ for some $i_0 > 1$.
Notice that the pre-condition $\varphi_P$ is on line $1$ of $T_P$ as an assertion.
For readability, we use $\Start$ as a meta symbol that stands for $\State_{i_0}$.


To check whether the final state of the execution of $P$ satisfies the post-condition $\psi_P$, we prepare the following rules:
\[
\cRcheck =
\left \{
\begin{array}{r@{\>}c@{\>}l@{\>}c@{}c@{}c}
\CRule{\Check(\symb{end}(\vec{x})) &}{& \symb{true} &}{& \psi_P &}\\
\CRule{\Check(\symb{end}(\vec{x})) &}{& \symb{false} &}{& \neg \psi_P &}\\
\end{array}
\right\}
\]
where $\Check: \sort{state} \funtype \BOOL$.
Then, to verify the Hoare triple $\triple{\varphi_P}{P}{\psi_P}$, we prepare the following constrained equation: 
\[
\CEqn{\Check(\Start(\vec{x}))}{\symb{true}}{ \varphi_P }
\]
In the following, we denote the above equation by $\EP$.

By definition, $\cR_P\cup\cRcheck$ has the following properties.
\begin{lemma}
\label{lem:Rcheck-properties}
All of the following hold:
\begin{enumerate}
\renewcommand{\labelenumi}{(\alph{enumi})}
\renewcommand{\theenumi}{(\alph{enumi})}
	\item\label{lem:item:Rcheck-orthogonal}
		$\cR_P \cup \cRcheck$ is orthogonal.
	\item\label{lem:item:termination_of_Rcheck}
		If $\cR_P$ is terminating, then $\cR_P\cup\cRcheck$ is so.
\end{enumerate}
\end{lemma}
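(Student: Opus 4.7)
The plan for part (a) is to decompose orthogonality into left-linearity and non-overlappingness and to exploit that $\Check$ does not occur in $\cR_P$. Left-linearity is immediate: $\cR_P$ is already left-linear by the remark at the end of Section~\ref{subsec:while_to_lctrs}, and each rule of $\cRcheck$ has the linear left-hand side $\Check(\symb{end}(x_1,\dots,x_n))$. For non-overlappingness I would consider three cases. Pairs internal to $\cR_P$ are non-overlapping by the same remark. A pair consisting of an $\cR_P$ rule and a $\cRcheck$ rule cannot overlap at the root because $\Check$ does not occur in $\cR_P$; moreover, the only non-root positions of $\Check(\symb{end}(\vec{x}))$ are the one at $\symb{end}(\vec{x})$ and the variable slots below it, and none of these can be rewritten, since $\symb{end}$ (the renamed $\State_m$ coming from the blank last line) is not a defined symbol of $\cR_P$ and values are irreducible. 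Finally, the two rules of $\cRcheck$ share a left-hand side but carry mutually unsatisfiable constraints $\psi_P$ and $\neg\psi_P$, so no substitution respects both simultaneously, which is exactly what the paper's non-overlap condition demands.

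For part (b) the crux is a sort argument ensuring that $\cR_P$-steps never duplicate occurrences of $\Check$, despite the fact that some $\cR_P$ rules repeat program variables on their right-hand sides (e.g.\ $\State_4(x,i,z)\to \State_5(x,i,z+i+\symb{1})$). The claim to establish is: every well-sorted term of sort $\INT$ or $\sort{state}$ over the signature of $\cR_P\cup\cRcheck$ is $\Check$-free. This follows by a straightforward induction on term structure, once one observes that each function symbol whose return sort is $\INT$ or $\sort{state}$ (namely the $\State_i$, $\symb{end}$, the arithmetic symbols of $\Sigmaint$, and the integer values) takes only $\INT$-sorted arguments, so the $\BOOL$-returning symbol $\Check$ is syntactically unreachable inside an $\INT$- or $\sort{state}$-sorted position. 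Since the variables in every $\cR_P$ rule have sort $\INT$, any matching substitution is $\Check$-free, so $\cR_P$-steps neither create nor duplicate occurrences of $\Check$; the same holds for calculation steps, which require values as substitutes and values do not contain $\Check$.

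From this the conclusion is routine. Let $|t|_\Check$ count occurrences of $\Check$ in $t$. Every $\to_{\cR_P}$-step preserves $|t|_\Check$, whereas every $\cRcheck$-step strictly decreases it by one. Hence any hypothetical infinite $\to_{\cR_P\cup\cRcheck}$-sequence contains only finitely many $\cRcheck$-steps, so some tail is an infinite $\to_{\cR_P}$-sequence, contradicting termination of $\cR_P$. The main obstacle I foresee is making the sort claim precise enough: I need to verify, given the paper's generic LCTRS framework, that the theory signature at hand really admits no function symbol with a $\BOOL$-argument and an $\INT$- or $\sort{state}$-valued result, i.e.\ no back-door such as a conditional operator from $\BOOL$ into $\INT$. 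This is clear for $\Sigmaint$ but should be stated explicitly as a mild assumption on the background theory so the structural induction goes through cleanly.
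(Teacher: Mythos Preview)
Your proof is correct. Part~(a) follows the same outline as the paper, only with more detail: the paper simply notes that $\cR_P$ is orthogonal, $\cRcheck$ is orthogonal, and $\cRcheck$ introduces no defined symbol of $\cR_P$, so no new overlaps arise.

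For part~(b) you take a genuinely different route. The paper argues directly from the sort of $\Check$ that any infinite $\to_{\cR_P\cup\cRcheck}$-sequence can be assumed to start from a term of the form $\State_i(\vec{t})$, where $\cRcheck$ rules never apply, so the sequence is already an $\cR_P$-sequence---contradiction. This is essentially a minimal-nonterminating-subterm argument stated very tersely. Your proof instead makes the sort observation explicit (no symbol returning $\INT$ or $\sort{state}$ takes a $\BOOL$ argument, hence $\Check$ cannot occur inside such terms), and then uses a counting measure $|t|_{\Check}$: $\cR_P$-steps and calculation steps preserve it, $\cRcheck$-steps strictly decrease it, so any infinite run has an $\cR_P$-only tail. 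Your argument is longer but more self-contained; it does not rely on an implicit appeal to a minimal subterm, and it isolates cleanly the one signature assumption you need (no $\BOOL\!\to\!\INT$ or $\BOOL\!\to\!\sort{state}$ symbol), which indeed holds for the concrete $\Sigmaint$ and the $\State_i,\symb{end}$ symbols used here. The paper's version buys brevity; yours buys transparency about where the sort restriction is actually used.
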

\begin{proof}
We first prove \ref{lem:item:Rcheck-orthogonal}.
As described in Section~\ref{subsec:while_to_lctrs}, $\cR_P$ is orthogonal.
By definition, $\cRcheck$ is orthogonal.
$\cRcheck$ has no defined symbol of $\cR_P$ and thus, $\cRcheck$ does not generate any overlap with $\cR_P$.
Therefore, $\cR_P\cup\cRcheck$ is orthogonal. 

Next, we prove \ref{lem:item:termination_of_Rcheck}.
Assume that $\cR_P$ is terminating but $\cR_P\cup\cRcheck$ is not.
Then, there exists an infinite reduction sequence of $\cR_P\cup\cRcheck$.
Due to the sort of $\Check$, the infinite reduction sequence starts with a term of the form $\State_i(\vec{t})$, and any rule of $\cRcheck$ is not used in the reduction sequence. 
This means that the infinite reduction sequence is caused by $\cR_P$.
This contradicts the assumption.
\qed	
\end{proof}

The equation $\EP$ has the following property.
\begin{theorem}
\label{thm:e_P}
If $\EP$ is an inductive theorem of\/ $\cR_P\cup\cRcheck$, then $\triple{\varphi_P}{P}{\psi_P}$ holds.	
\end{theorem}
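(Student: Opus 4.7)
The plan is to unfold the definition of partial correctness and then transport the inductive-theorem hypothesis through the correctness-of-conversion theorem. Fix an arbitrary assignment $\theta \colon \FVar(P) \to \Int$ with $\Interpret{\varphi_P\theta} = \top$, and suppose $\theta \Eval{P} \theta'$; it must be shown that $\Interpret{\psi_P\theta'} = \top$. Regarded as a substitution, $\theta$ sends every $x_k$ to an integer value, so it is a ground constructor substitution that respects the constraint $\varphi_P$ of $e_P$. The assumption that $e_P$ is an inductive theorem of $\cR_P\cup\cRcheck$ therefore gives
\[
\Check(\Start(\vec{x}))\theta \;\leftrightarrow^*_{\cR_P\cup\cRcheck}\; \symb{true}.
\]

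Next, I would bring in Theorem~\ref{thm:correctness_of_conversion}: since $\theta \Eval{P} \theta'$, we have $\Start(\vec{x})\theta \to^*_{\cR_P} \symb{end}(\vec{x})\theta'$, and applying this reduction inside the $\Check$-context yields
\[
\Check(\Start(\vec{x}))\theta \;\to^*_{\cR_P\cup\cRcheck}\; \Check(\symb{end}(\vec{x})\theta').
\]
Combined with the previous step, both $\symb{true}$ and $\Check(\symb{end}(\vec{x})\theta')$ are convertible to $\Check(\Start(\vec{x}))\theta$, so they are themselves interconvertible by $\leftrightarrow^*_{\cR_P\cup\cRcheck}$. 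By Lemma~\ref{lem:Rcheck-properties}\ref{lem:item:Rcheck-orthogonal} the system $\cR_P\cup\cRcheck$ is orthogonal and hence confluent, and since $\symb{true}$ is a value (therefore in normal form), this forces $\Check(\symb{end}(\vec{x})\theta') \to^*_{\cR_P\cup\cRcheck} \symb{true}$.

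Finally, I would conclude by contradiction. If $\Interpret{\psi_P\theta'} = \bot$, then $\theta'$ respects the constraint $\neg \psi_P$ of the second rule of $\cRcheck$, so $\Check(\symb{end}(\vec{x})\theta') \to_{\cR_P\cup\cRcheck} \symb{false}$. Confluence then demands a common reduct of $\symb{true}$ and $\symb{false}$, but both are distinct values in normal form, a contradiction. Therefore $\Interpret{\psi_P\theta'} = \top$, establishing partial correctness of $P$ w.r.t.\ $\varphi_P$ and $\psi_P$.

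The main subtle point, rather than any hard calculation, is the appeal to confluence of $\cR_P\cup\cRcheck$: one must justify the standard lifting of ``orthogonal implies confluent'' to the LCTRS setting, which relies on the fact that the two $\Check$-rules have disjoint, mutually unsatisfiable constraints and share no defined symbol with $\cR_P$, so no critical pairs arise. The other routine but necessary observation is that $\theta$ qualifies as a ground constructor substitution respecting $e_P$, which follows because integers are precisely the values of sort $\INT$ and hence constructors.
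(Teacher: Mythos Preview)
Your proof is correct and follows essentially the same route as the paper: both start from an assignment satisfying $\varphi_P$, invoke Theorem~\ref{thm:correctness_of_conversion} to reduce $\Check(\Start(\vec{x}))\theta$ to $\Check(\symb{end}(\vec{x})\theta')$, use the inductive-theorem hypothesis for convertibility with $\symb{true}$, and appeal to confluence via Lemma~\ref{lem:Rcheck-properties}\ref{lem:item:Rcheck-orthogonal} to force $\Check(\symb{end}(\vec{x})\theta') \to^* \symb{true}$. The only cosmetic difference is that you spell out the final step by contradiction using the second $\cRcheck$ rule, whereas the paper simply asserts that reducing to $\symb{true}$ entails $\psi_P\theta'$.
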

\begin{proof}
Let $\theta$ be an assignment for $\FVar(P)$.
Assume that $\varphi_P\theta$ holds and the execution of $P$ starting with $\theta$ halts with an assignment $\theta'$, i.e., $\theta \mathrel{\Eval{P}} \theta'$.
Then, it follows from Theorem~\ref{thm:correctness_of_conversion} that $\Start(\vec{x})\theta \mathrel{\to^*_{\cR_P}} \symb{end}(\vec{x})\theta'$, and hence $\Check(\Start(\vec{x}))\theta \mathrel{\to^*_{\cR_P}} \Check(\symb{end}(\vec{x}))\theta'$.
Since $\varphi_P\theta$ holds and $\EP$ is an inductive theorem of $\cR_P\cup\cRcheck$, we have that $\Check(\Start(\vec{x}))\theta \mathrel{\leftrightarrow^*_{\cR_P\cup\cRcheck}} \symb{true}$.
Since $\cR_P\cup\cRcheck$ is orthogonal (i.e., confluent) by Lemma~\ref{lem:Rcheck-properties}~\ref{lem:item:Rcheck-orthogonal}, we have that $\Check(\Start(\vec{x}))\theta \mathrel{\to^*_{\cR_P\cup\cRcheck}} \symb{true}$ and hence
$\Check(\symb{end}(\vec{x}))\theta'$ has to reduce to $\symb{true}$.
This means that $\psi_P\theta'$ holds.
Therefore, $\triple{\varphi_P}{P}{\psi_P}$ holds.
\qed	
\end{proof}
Theorem~\ref{thm:e_P} enables us to prove $\triple{\varphi_P}{P}{\psi_P}$ to hold by showing that $\EP$ is an inductive theorem of $\cR_P\cup\cRcheck$.
Note that the converse of Theorem~\ref{thm:e_P} holds if $P$ is terminating. 

Next, we formalize the transformation shown in Section~\ref{subsec:overview}.
We first prepare a function $\TransOneStep$ that takes a suffix $T$ of proof tableau $T_P$ and finite sets $\cE$ and $\cH$ of equations and rewrite rules, resp.,
and returns a suffix $T'$ of $T$, and finite sets $\cE'$ and $\cH'$ of equations and rewrite rules, resp.:
$\TransOneStep(T,\cE,\cH)=(T',\cE',\cH')$.
For readability, we use visualized notations for suffixes of proof tableaux, e.g., 
\begin{center}
\smallskip
{\footnotesize
\begin{tabular}{|@{\,}c@{\,}@{~~~}@{\,}l@{\,}|}
\cline{1-2}
$i$ & $@\varphi$;~~~~~~~~~\\
$i+1$& $@\psi$;\\
\vdots & ~~~\vdots\\
\cline{1-2}
\end{tabular}
}
\end{center}
\smallskip
for $@\varphi;\,@\psi;\,\ldots$ such that the first element $@\varphi$ is located on line $i$.
We assume that any equation in $\cE$ is of the form $\CEqn{\Check(\State_j(\vec{t}))}{\symb{true}}{\varphi}$ or $\CEqn{\Check(\symb{end}(\vec{t}))}{\symb{true}}{\varphi}$,
and then we define $\TransOneStep$ so as to make $\cE'$ a set of such equations.
Following the definition of proof tableaux,
the function $\TransOneStep$ is defined as follows:
\begin{itemize}
\item (two continuous assertions)
\[\begin{array}{@{}l@{}}
\TransOneStep(
\,
\mbox{\footnotesize
\begin{tabular}{|@{\,}c@{\,}@{~~~}@{\,}l@{\,}|}
\cline{1-2}
$i$ & $@\varphi$;~~~~~~~~~\\
$i+1$& $@\psi$;\\
\vdots & ~~~\vdots\\
\cline{1-2}
\end{tabular}
}
\, , \,
\{~\CEqn{\Check(\State_j(\vec{x}))}{\symb{true}}{\varphi}~\}\uplus \cE
, \,
\cH
) \\[18pt]
\hspace{10ex}
=
(
\,
\mbox{\footnotesize
\begin{tabular}{|@{\,}c@{\,}@{~~~}@{\,}l@{\,}|}
\cline{1-2}
$i+1$& $@\psi$;~~~~~~~~~\\
\vdots & ~~~\vdots\\
\cline{1-2}
\end{tabular}
}
\, ,
\{~\CEqn{\Check(\State_j(\vec{x}))}{\symb{true}}{\textcolor{blue}{\psi}}~\}\cup \cE
, 
\cH
)
\end{array}
\]
Note that $i > j$.
This case corresponds to the application of {\Generalization} to $(\{~\CEqn{\Check(\State_j(\vec{x}))$ $}{\symb{true}}{\psi}~\}\uplus \cE, \cH)$.

\item (assignments)
\[\begin{array}{@{}l@{}}
\TransOneStep(
\,
\mbox{\footnotesize
\begin{tabular}{|@{\,}c@{\,}@{~~~}@{\,}l@{\,}|}
\cline{1-2}
$i$ & $@\varphi$;~~~~~~~~~\\
$i+1$& $x_k := e$;\\
$i+2$ & $@\psi$;\\
\vdots & ~~~\vdots\\
\cline{1-2}
\end{tabular}
}
\,, \,
\{~\CEqn{\Check(\State_{i+1}(\vec{x}))}{\symb{true}}{\varphi}~\}\uplus \cE
, \,
\cH
) \\[25pt]
\hspace{10ex}
= 
(
\,
\mbox{\footnotesize
\begin{tabular}{|@{\,}c@{\,}@{~~~}@{\,}l@{\,}|}
\cline{1-2}
$i+2$ & $@\psi$;~~~~~~~~~\\
\vdots & ~~~\vdots\\\cline{1-2}
\end{tabular}
}
\, ,
\{~\CEqn{\Check(\State_{\textcolor{blue}{j}}(\vec{x}))}{\symb{true}}{\textcolor{blue}{\psi}}~\}\cup \cE
,
\cH
)
\end{array}
\]
where $\Rule{\State_{i+1}(\ldots,x_k,\ldots)}{\State_j(\ldots,e,\ldots)} \in \cR_P$.
Note that $\varphi = \psi\{x_k\mapsto e\}$, $j> i+1$, and $\CTerm{\State_{i+1}(\vec{x})}{\varphi} \mathrel{\to_{\cR_P}} \CTerm{\State_j(\vec{x})}{\psi}$.%
\footnote{
This is because 
$ 
\CTerm{\State_{i+1}(\vec{x})}{\varphi}
=
\CTerm{\State_{i+1}(\vec{x})}{\psi\{x_k\mapsto e\}} 
\mathrel{\to_{\mathtt{base}}} \CTerm{\State_j(x_1,\ldots,x_{k-1},e,x_{k+1},\ldots,x_n)}{\psi\{x_k\mapsto e\}} \mathrel{\sim} 
\CTerm{\State_j(\vec{x})}{\psi}
$. 
}
This case corresponds to the application of {\Simplification} to $(\{~\CEqn{\Check(\State_{i+1}(\vec{x}))}{\symb{true}}{\varphi}~\}\uplus \cE, \cH)$.

\item (the beginning of ``while'' statements)
\[\begin{array}{@{}l@{}}
\TransOneStep(
\,
\mbox{\footnotesize
\begin{tabular}{|@{\,}c@{\,}@{~~~}@{\,}l@{\,}|}
\cline{1-2}
$i$ & $@\zeta$;~~~~~~~~~\\
$i+1$& $\while @\,\xi\,(\varphi)\{$\\
$i+2$ & ~~~~$@\zeta \land \varphi$;\\
\vdots & ~~~~~~\vdots\\
\cline{1-2}
\end{tabular}
}
\, , \,
\{~\CEqn{\Check(\State_{i+1}(\vec{x}))}{\symb{true}}{\zeta}~\}\uplus \cE
, \,
\cH
) \\[25pt]
~~
=
(
\,
\mbox{\footnotesize
\begin{tabular}{|@{\,}c@{\,}@{~~~}@{\,}l@{\,}|}
\cline{1-2}
$i+2$ & ~~~~$@\zeta\land \varphi$;~~~~~~~~~\\
\vdots & ~~~~~~\vdots\\\cline{1-2}
\end{tabular}
}
\,, 
\left\{
\begin{array}{r@{\>}c@{\>}l@{\>}c@{}c@{}c}
\CEqn{\Check(\State_{\textcolor{blue}{j}}(\vec{x})) &}{& \symb{true} &}{& \zeta\land \textcolor{blue}{\varphi} &}\\
\CEqn{\Check(\State_{\textcolor{blue}{k}}(\vec{x})) &}{& \symb{true} &}{& \zeta\land \textcolor{blue}{\neg \varphi} &}\\
\end{array}
\right\}
\cup 
\cE
, \\\hspace{50ex}
\{~\textcolor{blue}{\CRule{\Check(\State_{i+1}(\vec{x}))}{\symb{true}}{\zeta}}~\}\cup\cH
)
\end{array}
\]
where $\CRule{\State_{i+1}(\vec{x})}{\State_j(\vec{x})}{\varphi},
\CRule{\State_{i+1}(\vec{x})}{\State_k(\vec{x})}{\neg \varphi} \in \cR_P$.
Note that $i+1 < j < k$.
This case corresponds to the application of {\Expansion} to $(\{~\CEqn{\Check(\State_{i+1}(\vec{x}))}{\symb{true}}{\zeta}~\}\uplus \cE, \cH)$.

\item (the end of ``while'' statements)
\[
\TransOneStep(
~
\mbox{\footnotesize
\begin{tabular}{|@{\,}c@{\,}@{~~~}@{\,}l@{\,}|}
\cline{1-2}
$i$ & ~~~~ $@\zeta$;~~~~~~~~~\\
$i+1$& $\}$\\
$i+2$ & $@\zeta \land \neg \varphi$;\\
\vdots & ~~~\vdots\\
\cline{1-2}
\end{tabular}
}
\, , \,
\{~\CEqn{\Check(\State_{i+1}(\vec{x}))}{\symb{true}}{\zeta}~\}
\uplus \cE
, \,
\cH
) 
=
(
\,
\mbox{\footnotesize
\begin{tabular}{|@{\,}c@{\,}@{~~~}@{\,}l@{\,}|}
\cline{1-2}
$i+2$ & $@\zeta\land \neg \varphi$;~~~~\\
\vdots & ~~~\vdots\\\cline{1-2}
\end{tabular}
}
\,, 
\cE
, \cH
)
\]
where $\Rule{\State_{i+1}(\vec{x})}{\State_j(\vec{x})} \in \cR_P$, $\CRule{\Check(\State_j(\vec{x}))}{\symb{true}}{\zeta} \in \cH$,  and $j < i+1$.
This case corresponds to the application of {\Simplification} with rule $\Rule{\State_{i+1}(\vec{c})}{\State_j(\vec{x})} \in \cR_P$,
{\Simplification} with rule $\CRule{\Check(\State_j(\vec{x}))}{\symb{true}}{\zeta} \in \cH$,
 and
{\Deletion}:
\[
\begin{array}{@{}l@{~}l@{}}
(\{\CEqn{\Check(\State_{i+1}(\vec{x}))}{\symb{true}}{\zeta}\}\cup \cE, \cH) 
&
{} \mathrel{\RIstep}
(\{\CEqn{\Check(\State_j(\vec{x}))}{\symb{true}}{\zeta}\}\cup \cE, \cH) \\ 
&
{} \mathrel{\RIstep}
(\{\CEqn{\symb{true}}{\symb{true}}{\zeta}\}\cup \cE, \cH) \\
&
{} \mathrel{\RIstep}
(\cE, \cH)
\\
\end{array}
\]

\item (the beginning of ``if'' statements)
\[\begin{array}{@{}l@{}}
\TransOneStep(
\,
\mbox{\footnotesize
\begin{tabular}{|@{\,}c@{\,}@{~~~}@{\,}l@{\,}|}
\cline{1-2}
$i$ & $@\varphi$;~~~~~~~~~\\
$i+1$& $\ifff(\psi)\{$\\
$i+2$ & ~~~~$@\varphi \land \psi$;\\
\vdots & ~~~~~~\vdots\\
\cline{1-2}
\end{tabular}
}
\, , \,
\{~\CEqn{\Check(\State_{i+1}(\vec{x}))}{\symb{true}}{\varphi}~\}\uplus \cE
, \,
\cH
) \\[25pt]
\hspace{10ex}
= 
(
\,
\mbox{\footnotesize
\begin{tabular}{|@{\,}c@{\,}@{~~~}@{\,}l@{\,}|}
\cline{1-2}
$i+2$ & ~~~~$@\varphi\land \psi$;~~~~\\
\vdots & ~~~~~~\vdots\\\cline{1-2}
\end{tabular}
}
\, ,
\left\{
\begin{array}{r@{\>}c@{\>}l@{\>}c@{}c@{}c}
\CEqn{\Check(\State_{\textcolor{blue}{j}}(\vec{x})) &}{& \symb{true} &}{& \varphi\land \textcolor{blue}{\psi} &}\\
\CEqn{\Check(\State_{\textcolor{blue}{k}}(\vec{x})) &}{& \symb{true} &}{& \varphi\land \textcolor{blue}{\neg \psi} &}\\
\end{array}
\right\}
\cup \cE
, 
\cH
)
\end{array}
\]
where $\CRule{\State_{i+1}(\vec{x})}{\State_j(\vec{x})}{\psi},
\CRule{\State_{i+1}(\vec{x})}{\State_k(\vec{x})}{\neg \psi} \in \cR_P$.
Note that $i+1 < j < k$.
This case corresponds to the application of {\CaseSplitting} to $(\{~\CEqn{\Check(\State_{i+1}(\vec{x}))}{\symb{true}}{\varphi}~\}\uplus \cE, \cH)$.

\item (the beginning of ``else'' statements)
\[\begin{array}{@{}l@{}}
\TransOneStep(
\,
\mbox{\footnotesize
\begin{tabular}{|@{\,}c@{\,}@{~~~}@{\,}l@{\,}|}
\cline{1-2}
$i$ & ~~~~$@\xi$;~~~~~~~~~\\
$i+1$& $\} \elseee \{$\\
$i+2$ & ~~~~$@\varphi \land \neg \psi$;\\
\vdots & ~~~~~~\vdots\\
\cline{1-2}
\end{tabular}
}
\, , \,
\left\{
\begin{array}{r@{\>}c@{\>}l@{\>}c@{}c@{}c}
\CEqn{\Check(\State_{i+1}(\vec{x})) &}{& \symb{true} &}{& \xi &}\\
\CEqn{\Check(\State_{j}(\vec{x})) &}{& \symb{true} &}{& \varphi\land \neg \psi &}\\
\end{array}
\right\}
\uplus \cE
, \,
\cH
) \\[25pt]
\hspace{10ex}
=
(
\,
\mbox{\footnotesize
\begin{tabular}{|@{\,}c@{\,}@{~~~}@{\,}l@{\,}|}
\cline{1-2}
$i+2$ & ~~~~$@\varphi\land \neg \psi$;~~~~\\
\vdots & ~~~~~~\vdots\\\cline{1-2}
\end{tabular}
}
\, ,
 \left\{
\begin{array}{r@{\>}c@{\>}l@{\>}c@{}c@{}c}
\CEqn{\Check(\State_{\textcolor{blue}{k}}(\vec{x})) &}{& \symb{true} &}{& \xi &}\\
\CEqn{\Check(\State_{j}(\vec{x})) &}{& \symb{true} &}{& \varphi\land \neg \psi &}\\
\end{array}
\right\}
\cup \cE
,
\cH
)
\end{array}
\]
where $\Rule{\State_{i+1}(\vec{x})}{\State_k(\vec{x})} \in \cR_P$.
Note that $i+2 < j < k$.
This case corresponds to the application of {\Simplification}
 to 
 $(
\{~
\CEqn{\Check(\State_{i+1}(\vec{x}))}{\symb{true}}{\xi},
~ 
\CEqn{\Check(\State_{j}(\vec{x}))}{\symb{true}}{\varphi\land \neg \psi}
~\}
\uplus \cE
,
\cH)$.

\item (the end of ``if'' statements)
\[\begin{array}{@{}l@{}}
\TransOneStep(
\,
\mbox{\footnotesize
\begin{tabular}{|@{\,}c@{\,}@{~~~}@{\,}l@{\,}|}
\cline{1-2}
$i$ & ~~~~$@\xi$;~~~~~~~~~\\
$i+1$& $\}$\\
$i+2$ & $@\xi$;\\
\vdots & ~~~\vdots\\
\cline{1-2}
\end{tabular}
}
\, , \,
\left\{
\begin{array}{r@{\>}c@{\>}l@{\>}c@{}c@{}c}
\CEqn{\Check(\State_k(\vec{x})) &}{& \symb{true} &}{& \xi &}\\
\CEqn{\Check(\State_{i+1}(\vec{x})) &}{& \symb{true} &}{& \xi &}\\
\end{array}
\right\}\uplus \cE
 , \,
\cH
) \\[25pt]
\hspace{10ex}
=
(
\,
\mbox{\footnotesize
\begin{tabular}{|@{\,}c@{\,}@{~~~}@{\,}l@{\,}|}
\cline{1-2}
$i+2$ & $@\xi$;~~~~~~~~~\\
\vdots & ~~~\vdots\\\cline{1-2}
\end{tabular}
}
\, , 
 \{~\CEqn{\Check(\State_{k}(\vec{x}))}{\symb{true}}{\xi}~\}
\cup \cE
, 
 \cH
)
\end{array}
\]
where $\Rule{\State_{i+1}(\vec{x})}{\State_k(\vec{x})} \in \cR_P$ and $i+1< k$.
This case corresponds to the application of {\Simplification}
 to 
 $(
\{~
\CEqn{\Check(\State_k(\vec{x}))}{\symb{true}}{\xi},
~
\CEqn{\Check(\State_{i+1}(\vec{x}))}{\symb{true}}{\xi}
~\}
\uplus \cE
 ,
\cH
)$.

\item (the end of tableaux)
\[
\TransOneStep(
\,
\mbox{\footnotesize
\begin{tabular}{|@{~}c@{~}@{~~~}@{~}l@{~}|}
\cline{1-2}
$i$ & $@\varphi$;~~~~~~~~~\\
\cline{1-2}
\end{tabular}
}
\, , \,
 \{~\CEqn{\Check(\symb{end}(\vec{x}))}{\symb{true}}{\varphi}~\}
 \uplus \cE
  , \,
 \cH
)
=
(
\,
\begin{tabular}{|@{\,}c@{\,}|@{\,}l@{\,}|}
\cline{1-2}
\end{tabular}
 \epsilon 
 , 
 \cE
,
\cH
)
\]
Note that the last element of $T_P$ is $@\psi_P$ and thus, $\varphi=\psi_P$.
Note also that $\CRule{\Check(\symb{end}(\vec{x}))}{\symb{true}}{\psi_P} \in \cRcheck$.
This case corresponds to the application of {\Simplification} and {\Deletion}:
\[
(\{\CEqn{\Check(\symb{end}(\vec{x}))}{\symb{true}}{\varphi}\}\cup \cE, \cH)
\mathrel{\RIstep}
(\{\CEqn{\symb{true}}{\symb{true}}{\varphi}\}\cup \cE, \cH) 
\mathrel{\RIstep}
(\cE, \cH)
\]

\end{itemize}
By by the definition of proof tableaux and $\TransOneStep$,
$\TransOneStep$ satisfies the following properties.
\begin{lemma}\label{lem:TransOneStep}
If $\TransOneStep(T,\cE,\cH) = (T',\cE',\cH')$, then 
(a)	$(\cE,\cH) \mathrel{\RIstep^*} (\cE',\cH')$, and
(b) if $T' \neq \epsilon$, then $\TransOneStep$ is applicable to $(T',\cE',\cH')$.
\end{lemma}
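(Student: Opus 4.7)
The plan is to prove both parts by a case analysis on the eight clauses in the definition of $\TransOneStep$. In each clause the displayed annotation already states which RI inference rule (or short sequence of rules) is intended, so the task is to verify that each side condition of the invoked rule really is met, given that $T$ is a suffix of a proof tableau and that $\cE$ contains equations of the prescribed form.

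For part (a) I would walk through the clauses one by one. The two ``two continuous assertions'' and ``end of tableau'' cases invoke {\Generalization} (resp.\ {\Simplification} and {\Deletion}); here the validity of $\varphi \implies \psi$ (resp.\ the presence of the calculation rule in $\cRcheck$) is guaranteed directly by the definition of a proof tableau. For the assignment case one has to check the equivalence $\CTerm{\State_{i+1}(\vec{x})}{\psi\{x_k\mapsto e\}} \sim \CTerm{\State_j(\vec{x})}{\psi}$, which is exactly the calculation already carried out in the footnote of that clause, so the base step on the rewrite rule produced by the conversion yields the required {\Simplification} step. The two ``while''-beginning and ``if''-beginning cases invoke {\Expansion} and {\CaseSplitting}; their side condition is that $p$ is a basic position of $\Check(\State_{i+1}(\vec{x}))$, which holds because $\vec{x}$ are variables, and that the two $\cR_P$-rules for $\State_{i+1}$ provide exactly the set $\Expd_{\cR_P}$ (modulo the obvious most-general unifiers $\vec{x}\mapsto\vec{x}$). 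The ``while''-end, ``else''-beginning and ``if''-end cases each decompose into two {\Simplification} steps followed by {\Deletion}; the key observation in the ``while''-end clause is that the induction hypothesis $\CRule{\Check(\State_j(\vec{x}))}{\symb{true}}{\zeta}$ has already been placed in $\cH$ by the matching ``while''-beginning step, and in the ``else''/``if''-end clauses the two equations agree on their right-hand side and constraint so {\Simplification} applies as stated.

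For part (b), I would inspect the tableau suffix $T'$ produced in each clause and argue that, as long as $T'\ne\epsilon$, its head is again the head of one of the eight patterns recognized by $\TransOneStep$. This is where the structural requirements on proof tableaux are used: the first clause of every proof tableau is an annotation, every annotation is followed either by another annotation, by a statement bracketed on both sides by annotations, or by the terminating annotation; bodies of ``while'' and ``if'' statements are themselves proof tableaux, hence begin with an annotation of the prescribed shape. Since each clause of $\TransOneStep$ consumes the initial annotation and leaves an annotation (or the closing bracket together with its following annotation) at the head of $T'$, the head of $T'$ fits one of the eight patterns.

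The main obstacle is the sheer number of bookkeeping details: in every clause one must simultaneously verify the side condition of an RI rule, check that the new set $\cE'$ again consists of equations of the form $\CEqn{\Check(\State_j(\vec{t}))}{\symb{true}}{\varphi}$ (or with $\symb{end}$), confirm that the rule added to $\cH$ in the ``while''-beginning clause matches the one consumed in the ``while''-end clause, and ensure that the line-number arithmetic (e.g.\ $i+1 < j < k$) is consistent with the conversion of Section~\ref{subsec:while_to_lctrs}. None of these points is deep, but getting them to line up uniformly across the eight clauses is the delicate part; once done, both (a) and (b) follow immediately.
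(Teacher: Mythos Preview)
Your approach matches the paper's, which in fact gives no detailed proof at all: the lemma is simply asserted to hold ``by the definition of proof tableaux and $\TransOneStep$,'' and your clause-by-clause case analysis is precisely the way to make that rigorous.

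One small correction to your bookkeeping: the ``else''-beginning and ``if''-end clauses each correspond to a \emph{single} {\Simplification} step, not two {\Simplification}s followed by {\Deletion}. In the ``else''-beginning case only the equation with $\State_{i+1}$ is rewritten (via $\Rule{\State_{i+1}(\vec{x})}{\State_k(\vec{x})}$) while the other is carried along unchanged; in the ``if''-end case the equation with $\State_{i+1}$ is rewritten to $\CEqn{\Check(\State_k(\vec{x}))}{\symb{true}}{\xi}$, which coincides with the equation already present, and set union collapses the two without an explicit {\Deletion}. This does not affect the soundness of your plan, only the step count.
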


Next, we define a function $\Trans$ that applies $\TransOneStep$ to $(T_P,\{\EP\},\emptyset)$ as much as possible, returning a list of RI processes:
\begin{itemize}
	\item $\Trans(\epsilon, \cE,\cH) = (\cE,\cH)$,
			and
	\item $\Trans(T,\cE,\cH) = (\cE,\cH), \Trans(T',\cE',\cH')$ where $T\ne\epsilon$ and $\TransOneStep(T,\cE,\cH)=(T',\cE',\cH')$.%
\footnote{
The result of $\Trans(T,\cE,\cH)$ is a sequence ``$(\cE,\cH), \Trans(T',\cE',\cH')$'' that has $(\cE,\cH)$ as its head element.
}
\end{itemize}
By definition and Lemma~\ref{lem:TransOneStep}, $\Trans$ satisfies the following properties.
\begin{lemma}
\label{lem:Trans-termination}
$\Trans(T_P,\{\EP\},\emptyset)$ returns a finite sequence of RI processes.
\end{lemma}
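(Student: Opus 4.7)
The plan is to prove termination of the recursion defining $\Trans$ by exhibiting a well-founded measure on its first argument and showing that every recursive call strictly decreases it.

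First I would fix a measure $|T|$ on (suffixes of) proof tableaux; the natural choice is the number of remaining lines, i.e.\ the length of the command-sequence represented by $T$ (counting annotations, assignments, \Skip, the header and closing brace of each \ifff/\elseee/\while block as separate lines, matching the line-numbered presentation). Since $T_P$ is a finite annotated \emph{while} program, $|T_P|$ is a finite natural number, and the base case $T=\epsilon$ has $|T|=0$.

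Next I would inspect each clause in the definition of $\TransOneStep$ and verify that whenever $\TransOneStep(T,\cE,\cH)=(T',\cE',\cH')$ we have $|T'| < |T|$. For the ``two continuous assertions'' clause exactly one head line is consumed, so $|T'| = |T|-1$. For the ``assignments'' clause the pair $@\varphi;\,x_k := e$ is consumed, so $|T'| = |T|-2$. For the beginnings of \while{} and \ifff{} blocks, the pair consisting of the preceding assertion and the loop/conditional header is stripped off and the body (whose first line is the new head assertion) remains, again giving $|T'| = |T|-2$. The symmetric clauses for the end of \while, beginning of \elseee, and end of \ifff{} likewise consume a fixed small prefix of $T$. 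Finally, in the ``end of tableaux'' clause, $T'=\epsilon$ and $|T'|=0 < 1 = |T|$.

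With the strict decrease established, I would conclude as follows. Together with Lemma~\ref{lem:TransOneStep}(b), which guarantees that $\TransOneStep$ is defined on every non-empty suffix reached along the recursion, an easy induction on $|T|$ shows that $\Trans(T,\cE,\cH)$ terminates. Instantiating this with $T = T_P$, $\cE = \{\EP\}$, $\cH=\emptyset$ yields the claim that $\Trans(T_P,\{\EP\},\emptyset)$ is a finite sequence of RI processes.

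I do not expect a genuine obstacle here: the argument is essentially a case analysis of a first-order terminating recursion on a syntactic object. The only mild subtlety is to be pedantic about exactly what counts as a ``line'' so that the seven non-trivial clauses all fit a single monotone measure; this is why I prefer to count every physical line (including opening/closing braces of blocks) rather than commands, so that clauses like ``end of \while'' and ``end of \ifff'' cleanly decrease the measure.
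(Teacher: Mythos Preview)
Your proposal is correct and follows essentially the same approach as the paper: both argue that the length of the first argument strictly decreases at each recursive call of $\Trans$ and invoke Lemma~\ref{lem:TransOneStep}(b) for definedness along the way. The paper simply asserts the decrease without the explicit clause-by-clause analysis you give, so your version is a more detailed instance of the same argument.
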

\begin{proof}
The first argument of $\Trans$ is a proof tableau and the length is decreasing when $\Trans$ is recursively called.
It follows from Lemma~\ref{lem:TransOneStep}~(b) that $\Trans$ calls $\TransOneStep$ until the first argument (suffixes of $T_P$) becomes $\epsilon$.
Therefore, $\Trans$ halts, returning a finite sequence of RI processes.
\qed
\end{proof}
\begin{lemma}
\label{lem:Trans}
Let the result of $\Trans(T_P,\{\EP\},\emptyset)$ be a sequence $(\cE_1,\cH_1), (\cE_2,\cH_2), \ldots, (\cE_n,\cH_n)$. 
Then,
$(\{\EP\},\emptyset) = (\cE_1,\cH_1) \mathrel{\RIstep^*} (\cE_2,\cH_2) \mathrel{\RIstep^*} \cdots \mathrel{\RIstep^*} (\cE_n,\cH_n) = (\emptyset,\cH_n) $.
\end{lemma}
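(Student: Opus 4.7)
The plan is to proceed by induction on the number $n$ of processes in the sequence returned by $\Trans(T_P,\{\EP\},\emptyset)$, which is finite by Lemma~\ref{lem:Trans-termination}.

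The equality $(\{\EP\},\emptyset) = (\cE_1,\cH_1)$ is immediate from the definition of $\Trans$: in both defining clauses, the first pair emitted by a call $\Trans(T,\cE,\cH)$ is $(\cE,\cH)$, and at the outermost invocation this pair equals $(\{\EP\},\emptyset)$. The chain $(\cE_i,\cH_i) \mathrel{\RIstep^*} (\cE_{i+1},\cH_{i+1})$ for $1 \leq i < n$ is obtained directly from Lemma~\ref{lem:TransOneStep}(a), since each consecutive pair in the output arises from a single application of $\TransOneStep$ during the recursion; concatenating these derivations yields the full chain.

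The main obstacle is to show $\cE_n = \emptyset$. My plan is to strengthen the induction with an invariant relating $\cE$ to the structure of the current tableau suffix $T$. Informally, the constrained equations in $\cE$ are in one-to-one correspondence with the ``pending control-flow threads'' of $T$, where a thread is a program point at which execution of $T$ will next arrive. The outermost call satisfies the invariant, with a single thread at the start of $T_P$ matched by $\EP$. A case analysis over the eight clauses of $\TransOneStep$ shows that the invariant is preserved: the ``two continuous assertions'' and ``assignment'' clauses advance a thread without changing the count; the opening clauses for \emph{while} and \emph{if} fork one thread into two, paired with the extension of $\cE$ by two equations; the ``beginning of else'' clause advances the completed \emph{then}-thread past the merge point; and the closing clauses for \emph{while}, \emph{if}, and the tableau each consume a thread, matched by the removal of one equation. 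Since the empty tableau has no pending threads, the invariant forces $\cE_n = \emptyset$.

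The delicate case will be the treatment of \emph{if}--\emph{else} blocks: the invariant must track, across possibly many intermediate $\TransOneStep$ steps inside each branch, how the two equations spawned at the opening brace are eventually merged at the closing brace, even when a branch itself contains nested loops or conditionals. Establishing the bijection for that case amounts to observing that the $\State$-subscripts manipulated by $\TransOneStep$ faithfully track the jump targets generated by the conversion of Section~\ref{subsec:while_to_lctrs}, so that the ``end of if'' clause always finds two sibling equations sharing a common left-hand side after a single $\Simplification$ step and therefore collapses them into one.
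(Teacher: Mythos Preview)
Your proposal is correct and follows essentially the same idea as the paper's proof: the first equality and the intermediate $\RIstep^*$ steps come directly from the definition of $\Trans$ and Lemma~\ref{lem:TransOneStep}(a), and the claim $\cE_n = \emptyset$ is obtained by tracking how the size of $\cE$ changes across the clauses of $\TransOneStep$.

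The difference is one of granularity. The paper does not set up an explicit bijection invariant; it simply counts: the opening clauses for \emph{while} and \emph{if} each increase $|\cE|$ by one, the corresponding closing clauses each decrease it by one, and all other clauses leave $|\cE|$ unchanged. Since blocks in a proof tableau are well-nested, the count returns to $1$ just before the final ``end of tableaux'' step, which then removes the last equation. Your bijection with ``pending control-flow threads'' is a strengthening of this count and is not needed for the cardinality conclusion; the applicability concern you raise in the last paragraph (that the right equations are present when the ``end of if'' clause fires) is already covered by Lemma~\ref{lem:TransOneStep}(b) and need not be re-argued here. Also, note that neither argument is really an induction on $n$; it is a per-step case analysis along the recursion of $\Trans$.
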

\begin{proof}
By definition, it is clear that the head of the resulting sequence is $(\{\EP\},\emptyset)$.
The last call of $\Trans$ takes $\epsilon$ as the first argument, and thus the last call of $\TransOneStep$ returns $(\epsilon,\cE_n,\cH_n)$.
In the case of the beginning of ``while'' or ``if'' statements, $\TransOneStep$ adds an equation to the second argument, and in the case of the end of ``while'' or ``if'' statements, $\TransOneStep$ removes an equation from the second argument.
This means that in the case of the end of $T_P$, the number of remaining equations is one, i.e., $|\cE_{n-1}|=1$.
It follows from the last application $\TransOneStep(\ldots,\cE_{n-1},\cH_{n-1})=(\epsilon,\cE_n,\cH_n)$ that $\cE_{n-1}=\{\CEqn{\Check(\symb{end}(\vec{x}))}{\symb{true}}{\psi_P}\}$ and $\cE_n=\emptyset$.
It follows from Lemma~\ref{lem:TransOneStep}~(a) that $(\cE_i,\cH_i) \mathrel{\RIstep^*} (\cE_{i+1},\cH_{i+1})$ for all $1 \leq i < n$.
Therefore, this lemma holds.
\qed
\end{proof}


Finally, we show that termination of $\cR_P$ implies both termination of $\cR_P\cup\cRcheck\cup\cH$ and total correctness of $P$ w.r.t.\ $\varphi_P$ and $\psi_P$.
Let $\Trans(T_P,\{\EP\},\emptyset)=(\{\EP\},\emptyset),\ldots, (\emptyset,\cH)$.
We have already shown that termination of $\cR_P$ implies termination of $\cR_P\cup\cRcheck$ (Lemma~\ref{lem:Rcheck-properties}~\ref{lem:item:termination_of_Rcheck}).
Thus, we show that termination of $\cR_P$ implies termination of $\cR_P\cup\cRcheck\cup\cH$.
Since the right-hand sides of oriented equations in $\cH$ are always $\symb{true}$, 
$\cH$ is always terminating and does not introduce non-termination into $\cR_P \cup \cRcheck$.
This means that if $\cR_P\cup\cRcheck$ is terminating, then so is $\cR_P \cup \cRcheck \cup \cH$.
\begin{theorem}
\label{thm:termination}
If $\cR_P$ is terminating, then $\cR_P\cup\cRcheck\cup\cH$ is so.	
\end{theorem}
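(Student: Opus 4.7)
The plan is to split the argument into two stages, matching the structure suggested in the paragraph immediately preceding the theorem. First, I would invoke Lemma~\ref{lem:Rcheck-properties}~\ref{lem:item:termination_of_Rcheck} to reduce the problem: termination of $\cR_P$ already gives termination of $\cR_P \cup \cRcheck$, so it suffices to show that adjoining $\cH$ to this system does not introduce non-termination.

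Next I would look carefully at the shape of the rules in $\cH$. Going through the cases of $\TransOneStep$, the only case that adds a rule to $\cH$ is the ``beginning of while'' case, which adds a rule of the form $\CRule{\Check(\State_{i+1}(\vec{x}))}{\symb{true}}{\zeta}$. Thus every rule in $\cH$ has the constant right-hand side $\symb{true}$, and its left-hand side is headed by $\Check$. Similarly, each rule in $\cRcheck$ has right-hand side $\symb{true}$ or $\symb{false}$ and left-hand side $\Check(\symb{end}(\vec{x}))$. Crucially, the symbol $\Check$ does not occur in any rule of $\cR_P$, because the conversion in Section~\ref{subsec:while_to_lctrs} produces only symbols $\State_1,\ldots,\State_m$ (and $\symb{end}$); $\Check$ is a fresh symbol used solely for verification.

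The main step is then a counting argument. For a term $t$, let $N(t)$ denote the number of occurrences of $\Check$ in $t$. An $\cR_P$-step preserves $N$, since $\Check$ appears in neither side of any rule in $\cR_P$. An $\cRcheck$- or $\cH$-step strictly decreases $N$ by one, since it contracts a redex headed by $\Check$ and replaces it by a value containing no $\Check$. Suppose for contradiction that there is an infinite $\cR_P \cup \cRcheck \cup \cH$-derivation starting from some term $t_0$. Then at most $N(t_0)$ of its steps can use a rule from $\cRcheck \cup \cH$, so from some index on the derivation consists entirely of $\cR_P$-steps, giving an infinite $\cR_P$-derivation, contradicting the assumption. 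Hence $\cR_P \cup \cRcheck \cup \cH$ is terminating, as required.

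I do not foresee a genuine obstacle here; the only point that deserves care is verifying the two invariants that make the $\Check$-counting argument work, namely that $\Check$ never appears in $\cR_P$ and that every rule in $\cRcheck \cup \cH$ has a right-hand side in which $\Check$ does not occur. Both are immediate from the definitions of the conversion in Section~\ref{subsec:while_to_lctrs}, of $\cRcheck$, and of $\TransOneStep$.
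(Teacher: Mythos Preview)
Your proposal is correct and follows essentially the same approach as the paper: the paper's proof is the short paragraph immediately preceding the theorem, which invokes Lemma~\ref{lem:Rcheck-properties}~\ref{lem:item:termination_of_Rcheck} and then asserts that, since every rule in $\cH$ has right-hand side $\symb{true}$, adjoining $\cH$ cannot introduce non-termination. Your $\Check$-counting argument simply makes this last assertion rigorous; the one small point to make explicit is that the variables in $\cR_P$-rules all have sort $\INT$, so their instantiations cannot contain $\Check$, which is what actually guarantees that an $\cR_P$-step preserves $N$ (mere absence of $\Check$ from the rule itself is not enough if a duplicated variable could carry a $\Check$).
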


As a consequence of Lemma~\ref{lem:Trans} and Theorem~\ref{thm:termination}, we have the following result.
\begin{theorem}
\label{thm:main}
If $\cR_P$ is terminating, then	
$(\{\EP\},\emptyset) \mathrel{\RIstep^*} \cdots \mathrel{\RIstep^*} (\emptyset,\cH)$ is valid,
		and thus,
$\ttriple{\varphi_P}{P}{\psi_P}$ holds (i.e., $P$ is totally correct w.r.t.\ $\varphi_P$ and $\psi_P$).
\end{theorem}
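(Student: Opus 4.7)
The plan is to chain together the previously established lemmas so that termination of $\cR_P$ propagates all the way to total correctness. First, I would observe that Lemma~\ref{lem:Trans-termination} guarantees that $\Trans(T_P,\{\EP\},\emptyset)$ produces a \emph{finite} sequence of RI processes, and Lemma~\ref{lem:Trans} then tells us the associated RI derivation has exactly the shape $(\{\EP\},\emptyset) \mathrel{\RIstep^*} \cdots \mathrel{\RIstep^*} (\emptyset,\cH)$ over the base LCTRS $\cR_P\cup\cRcheck$. This establishes the syntactic half of the claim, regardless of termination.

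Next I would invoke Theorem~\ref{thm:termination}: assuming $\cR_P$ is terminating, the union $\cR_P\cup\cRcheck\cup\cH$ is terminating as well. By the definition of a valid inference sequence given in Section~\ref{sec: ri}, this is exactly the condition that makes the above RI derivation \emph{valid} with respect to the LCTRS $\cR_P\cup\cRcheck$. Applying the (cited) soundness theorem for RI from~\cite{FKN17tocl}, every equation in the initial set $\{\EP\}$ is an inductive theorem of $\cR_P\cup\cRcheck$; in particular, $\EP$ is.

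With $\EP$ known to be an inductive theorem of $\cR_P\cup\cRcheck$, Theorem~\ref{thm:e_P} immediately yields partial correctness, i.e.\ $\triple{\varphi_P}{P}{\psi_P}$ holds. To upgrade this to total correctness, I would recall from Section~\ref{sec: hoare} that total correctness is equivalent to partial correctness together with termination of the program under the pre-condition. Termination of $P$ (on all inputs, hence in particular those satisfying $\varphi_P$) follows from Theorem~\ref{thm:correctness_of_conversion}: an infinite execution of $P$ from any $\theta$ would induce a non-terminating reduction of $\Start(\vec{x})\theta$ on $\cR_P$, contradicting the assumption. Combining termination with partial correctness gives $\ttriple{\varphi_P}{P}{\psi_P}$.

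There is no deep obstacle here, since every piece has been assembled earlier; the theorem is essentially a composition lemma. The only point requiring some care is to identify the correct base LCTRS at each step: the RI derivation produced by $\Trans$ operates over $\cR_P\cup\cRcheck$ (not $\cR_P$ alone), so the validity condition must be checked for $\cR_P\cup\cRcheck\cup\cH$, and it is precisely this extension that Theorem~\ref{thm:termination} addresses via the observation that $\cRcheck$ introduces a fresh defined symbol $\Check$ and that rules in $\cH$ always rewrite to $\symb{true}$. Keeping track of these base systems is the main subtle step; once it is done, the conclusion is immediate.
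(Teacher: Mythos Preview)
Your proposal is correct and follows the same approach as the paper, which simply declares the theorem ``a consequence of Lemma~\ref{lem:Trans} and Theorem~\ref{thm:termination}'' without spelling out the chain. You have filled in exactly the intended steps: construct the derivation via $\Trans$, use Theorem~\ref{thm:termination} to certify validity, apply the RI soundness theorem to get that $\EP$ is an inductive theorem, invoke Theorem~\ref{thm:e_P} for partial correctness, and combine with termination of $P$ (which follows from termination of $\cR_P$ via the remark after Theorem~\ref{thm:correctness_of_conversion}) to obtain total correctness.
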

Theorem~\ref{thm:main} means that if $\triple{\varphi_P}{P}{\psi_P}$ is proved to hold (via $T_P$), then
(1)	there exists an inference sequence of RI, and
(2) if $\cR_P$ is terminating, then $\ttriple{\varphi_P}{P}{\psi_P}$ can be proved to hold without using inference rules for proving total correctness.

\section{Conclusion}
\label{sec:conclusion}

In this paper, we showed that a proof tableau for partial correctness can be transformed into an inference sequence of RI, and also showed that if the corresponding LCTRS is terminating, then the inference sequence is valid and the program is totally correct w.r.t.\ the specified pre- and post-conditions.
Our result indicates that if we can prove partial correctness of a program by Hoare logic, then there exists a way to prove it by RI.
However, this does not mean that RI is better than Hoare logic. 
From the idea of the transformation, we may apply RI to the initial equation such as (A1) instead of constructing a proof tableau for a given Hoare triple.
Unfortunately, \textsf{Ctrl}~\cite{KN15lpar}, an RI tool for LCTRSs, did not succeed in automatically proving (A1) an inductive theorem of $\cRsumverif$.

Hoare logic often requires appropriate loop invariants, but once finding such invariants, we can construct a proof tableau in a deterministic way.
On the other hand, there must be several inference sequences of RI, and for automation, RI requires an appropriate strategy for the application of inference rules. 
In addition to the strategy, to apply {\Generalization} in this paper, we have to, given a constraint $\varphi$, find an appropriate formula $\psi$ such that $\varphi \implies \psi$ is valid and $\psi$ makes the later inference succeed.
In Section~\ref{subsec:overview}, we had the proof tableau $\Psumtab$ with an appropriate loop invariant, and thus, we could apply {\Generalization}, succeeding in transforming $\Psumtab$ into a valid inference sequence of RI.
However, this is not always possible.
For this reason, it is worth improving tools for RI so as to directly prove (A1) an inductive theorem of $\cRsumverif$.

It would be possible to transform a proof tableaux for \emph{total correctness}, which includes \emph{ranking functions} in loop invariants, into an inference sequence of RI.
However, it is not clear how to use ranking functions to prove termination of the corresponding LCTRS.
Recall that termination of programs is not preserved by the conversion to LCTRSs.
For this reason, there is a program such that there exists a ranking function to ensure termination of the program but the corresponding LCTRS is not terminating.
On the other hand, to prove validity of the converted inference sequence of RI, we can use techniques for proving termination of LCTRSs, which are based on techniques developed well for term rewriting.
The transformation of proof tableaux for partial correctness into inference sequences of RI enables us to use such techniques instead of finding appropriate ranking functions for all loops in given programs. 
The use of techniques to prove termination is one of the advantages of the transformation. 

As future work, we will transform some inference sequences of RI into proof tableaux of Hoare logic in order to compare RI with Hoare logic. 
For inference sequences of RI, we sometimes need a lemma equation that is helpful to use induction, but it is not easy to find an appropriate lemma equation.
For this reason, we expect the transformation between proof tableaux of Hoare logic and inference sequences of RI to help us to develop and improve a technique for lemma generation.

\paragraph{Acknowledgement}
We thank the anonymous reviewers of our first submission for their useful comments to improve this paper, and to encourage us to continue this work.



\end{document}